\def\Hy@Warning#1{} 
\def\CC{C\nolinebreak[4]\hspace{-.05em}\raisebox{.4ex}{\relsize{-2}{\textbf{++}}}}
\tikzset{
    double color fill/.code 2 args={
        \pgfdeclareverticalshading[%
        tikz@axis@top,tikz@axis@middle,tikz@axis@bottom%
        ]{diagonalfill}{100bp}{%
            color(0bp)=(tikz@axis@bottom);
            color(50bp)=(tikz@axis@bottom);
            color(50bp)=(tikz@axis@middle);
            color(50bp)=(tikz@axis@top);
            color(100bp)=(tikz@axis@top)
        }
        \tikzset{shade, left color=#1, right color=#2, shading=diagonalfill}
    }
}
\renewcommand\log{\ln}
\newcommand\dist{\mathrm{dist}}
\newcommand{\vA}{\vec A}
\renewcommand{\epsilon}{\eps}
\renewcommand{\vec}[1]{\boldsymbol{#1}}
\newcommand\SIGMA{\vec\sigma}
\newtheorem{definition}{Definition}[section]
\newtheorem{theorem}[definition]{Theorem}
\newtheorem{lemma}[definition]{Lemma}
\newtheorem{proposition}[definition]{Proposition}
\newtheorem{fact}[definition]{Fact}
\newcommand\cC{\mathcal{C}}
\newcommand\cD{\mathcal{D}}
\newcommand\cE{\mathcal{E}}
\newcommand\cT{\mathcal{T}}
\newcommand\cV{\mathcal{V}}
\def\cE{{\mathcal E}}
\newcommand\eps{\varepsilon}
\newcommand\Erw{\mathbb{E}}
\newcommand{\Po}{{\rm Po}}
\newcommand{\Bin}{{\rm Bin}}
\newcommand\bc[1]{\left({#1}\right)}
\newcommand\cbc[1]{\left\{{#1}\right\}}
\newcommand\brk[1]{\left\lbrack{#1}\right\rbrack}
\newcommand\abs[1]{\left|{#1}\right|}
\newcommand\pr{\mathbb{P}} 
\renewcommand\Pr{\pr}
\newcommand\A{\vA}
\def\pr{{\mathbb P}}
\newcommand{\remove}[1]{}
\newcommand{\be}{\begin{equation}}
    \newcommand{\bel}[1]{\begin{equation}\lab{#1}\ }
        \newcommand{\ee}{\end{equation}}
    \newcommand{\bea}{\begin{eqnarray}}
        \newcommand{\eea}{\end{eqnarray}}
    \newcommand{\bean}{\begin{eqnarray*}}
        \newcommand{\eean}{\end{eqnarray*}}
\pgfplotsset{compat=1.14}
\tikzstyle{node} = [shape=circle,draw=black]
\newcommand{\colOrange}{orange}
\newcommand{\colGreen}{green}
\newcommand{\colBlue}{blue}
\tikzstyle{node} = [shape=circle,draw=black]
\tikzstyle{dottetReplace} =[fill = violet!20!white]
\newcommand{\betweenDist}{0}
\newcommand{\pz}{\vec{\omega}}
\newcommand{\cc}{\vec{\omega}_c}
\newcommand{\activenodes}{\vec{X}^\star}
\newcommand{\candidates}{\mathcal{C}}
\newcommand{\distactive}[1]{t_{#1}^{\activenodes}}
\newcommand{\lca}{\cc}
\title{Inference of a Rumor's Source\\ in the Independent Cascade Model}
\author{%
  Petra Berenbrink \\
  Universität Hamburg\\
  Hamburg, Germany \\
  \texttt{petra.berenbrink@uni-hamburg.de} \\
  \And
  Max Hahn-Klimroth \\
  TU Dortmund University\\
  Dortmund, Germany \\
  \texttt{max.hahn-klimroth@cs.tu-dortmund.de} \\
  \And
  Dominik Kaaser \\
  TU Hamburg\\
  Hamburg, Germany\\
  \texttt{dominik.kaaser@tuhh.de} \\
  \And
  Lena Krieg \\
  TU Dortmund University\\
  Dortmund, Germany \\
  \texttt{lena.krieg@tu-dortmund.de} \\
  \And
  Malin Rau \\
  Universität Hamburg\\
  Hamburg, Germany \\
  \texttt{malin.rau@uni-hamburg.de} \\
}
\def\@noticestring{}
\begin{document}

\maketitle
\pagestyle{plain}

\begin{abstract}
We consider the so-called \emph{Independent Cascade Model} for rumor spreading or epidemic processes popularized by Kempe et al.\ [2003].
In this model, a small subset of nodes from a network are the source of a rumor.
In discrete time steps, each informed node "infects" each of its uninformed neighbors with probability $p$.
While many facets of this process are studied in the literature, less is known about the inference problem: given a number of infected nodes in a network, can we learn the source of the rumor?
In the context of epidemiology this problem is often referred to as \emph{patient zero problem}.
It belongs to a broader class of problems where the goal is to infer parameters of the underlying spreading model, see, e.g., Lokhov [NeurIPS'16] or Mastakouri et al. [NeurIPS'20].

In this work we present a maximum likelihood estimator for the rumor's source, given a snapshot of the process in terms of a set of active nodes $X$ after $t$ steps. Our results show that, for cycle-free graphs,  the likelihood estimator undergoes a non-trivial phase transition as a function $t$. We provide a rigorous analysis for two prominent classes of acyclic network, namely $d$-regular trees and Galton-Watson trees, and verify empirically that our heuristics work well in various general networks. 
\end{abstract}

\section{Introduction}

In this paper we consider a stochastic diffusion process which models the spread of information or influence in networks. Influence propagation is  motivated by many applications from various fields: in marketing these processes are studied to maximize the adoption of a new product; these processes  are used to study how social media influencers manipulate humans in social networks, in epidemiology they are used to study the spread of viruses or disease \cite{Brauer2017,SIRFIRST1927}, or, in general as processes that spread information in networks \cite{Becker_Coro_DAngelo_Gilbert_2020, goffman1964generalization, Kempe2003,Lerman_Ghosh_2010,rum2007,Sadilek_Kautz_Silenzio_2021}.  

On a very high level these processes work as follows. Initially a small subset of the vertices are in a distinguished state (they might have  a piece of information, or they are infected, depending on the application in mind).
In this paper we will call these vertices, based on rumor spreading, \emph{informed} vertices. Informed vertices can inform their neighbors and the rumor spreads as time passes by through the network. 
Most publications studying stochastic diffusion processes observe these processes  in a \emph{forward direction}, i.e., they consider how information spreads in a network, how many nodes will become informed,  or which nodes have the largest influence on the vertices in a social network. Those processes are well understood in simple networks \cite{britton_janson_martin-loef_2007,sir_random_rigor}. In this paper we study the learning problem of inferring $\pz$, a problem which received far less attention.  Our goal is to detect the \emph{source} of the rumor. In the disease spreading settings this problem is referred to as  the \emph{patient zero problem}.
This inference problem was studied with respect to the SI model from epidemiology rigorously \cite{Shah2012}. Under the simple SI disease spreading model once infected nodes stay infected and they can infect randomly chosen neighboring nodes. Understanding inference problems of this kind better will help us to find the source of outbreaks of infectious diseases like COVID-19 or to find the source of rumors. The later might help to prevent that political elections are influenced from the outside world.

In this paper we employ the well-known \emph{Independent Cascade model} (see, e.g., \cite{Kempe2003}).
The process starts with an initial set of active nodes $I_0$ and it works in discrete steps.  When node $v$ first becomes active in step $t$, it is given a single chance (one shot) to activate each (currently inactive) neighbor with probability $p$. Whether or not $v$ succeeds to activate any of its neighbors, it cannot make any further attempts in later rounds. It remains inactive for all steps $t'>t$. The process runs until no more activations are possible. 
We assume that $I_0$ (the \emph{rumor's source}) contains only a single node denoted by $\pz$. 
We call all nodes that were activated at one point of time during the process \emph{informed}. 
Note that this  \emph{one-shot} property is  a very fitting model for rumor or disease spreading in social networks. Indeed, once a user hears about an article supporting her opinion, she will either ignore it or share it within her social contacts in the near future. Every of the possible recipients either ignores her opinion (does not get activated) or decides to share it again with its peers (gets activated). Furthermore, users are unlikely to share the same article twice. In the case of disease spreading the informed vertices model the persons which caught the illness and the active ones model the persons which are infectious at any point of time. 

Our problem fits very well into the so-called  \emph{teacher-student model}, a framework which is frequently used to model (machine) learning tasks. The model was introduced by Gardner \cite{Gardner_1989} in the context of studying fundamental properties of the Perceptron, a fairly simple binary classifier. \cite{coja_spiv,NEURIPS2021_4e8eaf89,NEURIPS2021_a2137a2a,lenka_florent_stat_physics_inference}. Suppose a \emph{teacher} samples a \emph{ground-truth} $\SIGMA$ from a distribution called \emph{teacher's prior} $\mu_P$. Rather than directly revealing this ground-truth to a student, the teacher creates a condensed version $\hat \SIGMA$ of the ground-truth by the means of a \emph{teacher's model} $\mu_M$. Now, in the so-called Bayes optimal case, the teacher conveys $\hat \SIGMA, \mu_P, \mu_M$ to her student. The student's task is to infer a non-trivial guess of $\SIGMA$ from the observed data. 
In the context of our contribution, the teacher's prior is, given a network $G$, to select one node uniformly at random as the rumor's source $\pz$.
Suggestion: The teacher's model is the $t$-step forward process of the Independent Cascade Model on $G$ starting with the source $\pz$.
The condensed information $\hat \SIGMA$ that the student receives consists of the network $G$ and the set of currently active nodes $\activenodes$. The student's task now is to infer $\pz$ from $G$ and $\activenodes$. 

In our first result, we prove that our model is Bayes optimal (or, in terms of statistical physics, the \emph{Nishimori property} holds). 
Secondly, with respect to $d-$regular trees, we show that for a small spreading parameter $p$ (in comparison to the degree of the tree $d$) it is not possible to infer the source of the rumor. 
For a large spreading parameter $p$ strong detection is possible, we show that we can detect the source node with a very large probability. 
The probability approaches one as a function of the time $t$ at which the snapshot was done. 
For intermediate $p$ we show that a proposed node $\cc$ is the source of the rumor with a constant probability.
Furthermore, we bound the probability that the real rumor source $\pz$ is far away from $\cc$. 
Another way to read our results is that inference of $\pz$ gets easier for increasing value of $(d-1) \cdot p$. 
Finally, we establish a similar phase transition with respect to Galton-Watson processes with spreading parameter $\Po( \lambda )$.

\paragraph{Related Work.}
Forward propagation processes, like the epidemic models \cite{Brauer2017,SEIR,SIRFIRST1927,SIS}, rumor spreading \cite{goffman1964generalization,Lerman_Ghosh_2010,rum2007,Sadilek_Kautz_Silenzio_2021}, information cascades \cite{gruhl2004,Kempe2003,Zhao2011}, blog propagation models \cite{Leskovec2007PatternsOC}, and marketing strategies \cite{Becker_Coro_DAngelo_Gilbert_2020, Kempe2003} have been studied extensively and for a long time within different research communities and we refrain from discussing the extensive literature here. On the contrary, rigorous contributions on the corresponding inference problem, the source detection task, are very rare. 

To the best of our knowledge, all existing rigorous results on how to infer a rumor's source, are with respect to the SI model \cite{kazemitabar2019approximate, Shah2010, Shah2012}. In our notation, the author's of \cite{Shah2010, Shah2012} prove that on specific infinite acyclic networks like $d$-regular trees, super-critical Galton-Watson processes, and geometric graphs, approximate inference of $\pz$ is always possible in the SI model as long as the infinite tree satisfies certain expansion properties (for instance, in the $d$-regular case, this requires $d \geq 3$). Furthermore, there are strong tail-bounds given that prove that the probability of declaring a \emph{far-apart} node as the rumor's source is small. The results are proved by a fundamental connection between a generalized Polya's urn and the SI model on acyclic networks.

Nevertheless, there are many well explained heuristics towards the source detection task on various network types which are supported by extensive simulation studies \cite{Bindi2017,Jain2016,Ji2017AnAF,Prakash2013,Wang_Wang_Pei_Ye_2017}, in recent contributions also based on neural networks \cite{biazzo2021epidemic,Shah2020FindingPZ,Shu2021}. Finally, a closely related problem that attained attention recently, is not to infer $\pz$ given $\activenodes$ but to infer the parameters of the underlying spreading model. Over the last years, learning strategies towards this problem were proposed and studied experimentally \cite{reconstruction_epidemics_parameters,spread_covid_neurips}.

\section{Model and Results}
We are given a communication network $G = (V,E)$ with $n$ vertices $x_1 \ldots x_n$ and $m$ edges.  We assume $I_0=\{\pz\}$ and $\pz$ is chosen uniformly at random from all vertices. Hence, the rumor originates at a single source.
A \emph{spreading parameter} $p \in (0,1)$ determines the  viciousness of the rumor.

The diffusion process in the Independent Cascade Model runs in discrete, synchronous steps.
Let $I_t$ be the set of vertices which are active in step $t$.
In step $t$ we will call the vertices in $I_0\cup I_1\ldots \cup I_t$ \emph{informed}.
In every step $t\ge 1$ every  active node $x_i \in I_t$ activates any of its uninformed neighbors with probability $p$.  
All these  newly informed vertices form  the set $I_{t+1}$.  Note that  every node becomes active exactly once but vertices have potentially the chance to be activated by each of their neighbors.  
Note that it can happen that the process dies out at a step $t$. 
In this case for all $t' \geq t$ it holds that $I_{t'}= \emptyset$.

The \emph{interference problem} is defined as follows.
We observe the state of the network at an arbitrary time $t$. 
The task is to infer $\pz$ given only $(G, I_t)$ and the parameter $p$.
In this paper  we study the following variants of the problem.
\begin{itemize}
    \item \emph{Strong detection:}   here we have to infer $\pz$ correctly with high probability\footnote{We say that a sequence of events $\cE_1, \cE_2, \ldots$ takes place with high probability (w.h.p.) if $\lim_{t \to \infty} \Pr\bc{\cE_t} = 1$.},
    \item\emph{Weak detection:}  inference of $\pz$ is only required with positive probability\footnote{We say that  a sequence of events $\cE_1, \cE_2, \ldots$ takes place with positive probability if $\lim_{t \to \infty} \Pr\bc{\cE_t} > 0$.}.
\end{itemize}

Our first result relates the probability that a given set $X$ is active conditioned on some node $v$ being the source to the probability that $v$ is the source conditioned on $X$ being active. 
This establishes the so-called \emph{Nishimory property} (or Bayes optimality) of our inference problem. In terms of the teacher-student model, it states that the student has access to the teacher's prior and the teacher's model. Equivalently, in expectation there is no statistical difference between the ground-truth and a uniform sample from the posterior distribution \cite{lenka_florent_stat_physics_inference}. \Cref{thm_mle_general} applies to all types of networks as long as the rumor's source $\pz$ is chosen uniformly at random.

\begin{theorem} \label{thm_mle_general} Let $G = (V, E)$ be an arbitrary network and  fix an arbitrary step $t$. 
Let $\activenodes$ be the set  active vertices  in step $t$. For any $X\subseteq V$
\begin{align*}
\arg\max_{v \in V} \Pr \bc{ \activenodes = X \mid \pz = v} = \arg\max_{v \in V} \Pr \bc{ \pz = v \mid \activenodes = X}.
\end{align*}
\end{theorem}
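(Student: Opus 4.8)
The plan is to derive the identity directly from Bayes' rule, exploiting the fact that the prior on the source is uniform. Write $n = |V|$, fix a target set $X \subseteq V$, and assume first that $\Pr\bc{\activenodes = X} > 0$, so that the conditional probability on the right-hand side is well defined. By Bayes' rule together with the law of total probability,
\begin{align*}
\Pr\bc{\pz = v \mid \activenodes = X}
= \frac{\Pr\bc{\activenodes = X \mid \pz = v}\,\Pr\bc{\pz = v}}{\Pr\bc{\activenodes = X}}
= \frac{\Pr\bc{\activenodes = X \mid \pz = v}}{n\,\Pr\bc{\activenodes = X}},
\end{align*}
where the second equality uses $\Pr\bc{\pz = v} = 1/n$ for every $v \in V$, which is exactly the assumption that the source is chosen uniformly at random. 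The key observation is that the factor $1/(n\,\Pr\bc{\activenodes = X})$ is a strictly positive constant that does not depend on $v$. Hence $v \mapsto \Pr\bc{\pz = v \mid \activenodes = X}$ is a positive scalar multiple of $v \mapsto \Pr\bc{\activenodes = X \mid \pz = v}$, the two functions attain their maximum over $V$ on precisely the same set of vertices, and applying $\arg\max_{v \in V}$ to both sides gives the claim.

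It remains to dispose of the degenerate case $\Pr\bc{\activenodes = X} = 0$. In that case the law of total probability gives $0 = \Pr\bc{\activenodes = X} = \frac1n \sum_{v \in V} \Pr\bc{\activenodes = X \mid \pz = v}$, and since every summand is nonnegative we conclude $\Pr\bc{\activenodes = X \mid \pz = v} = 0$ for all $v$. Thus the left-hand $\arg\max$ equals all of $V$, while the right-hand side conditions on a null event and is vacuous, so there is nothing to prove; equivalently one simply restricts attention to sets $X$ that arise as snapshots with positive probability, which is the only regime of interest.

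I do not anticipate a genuine obstacle here: the statement is an instance of the elementary fact that under a uniform prior the maximum-likelihood and maximum-a-posteriori estimators coincide, which is precisely the Bayes-optimality (Nishimori) property announced in the introduction. The only points requiring a little care are (i) noting that every probability above is an honest finite sum over the possible realizations of the one-shot Independent Cascade dynamics started at a fixed source, so that all conditional probabilities are elementary and no measure-theoretic subtleties arise, and (ii) the null-event bookkeeping in the previous paragraph. The real work of the paper lies in the subsequent analysis of \emph{which} vertex this common $\arg\max$ actually is on $d$-regular trees and Galton--Watson trees, not in this normalization identity.
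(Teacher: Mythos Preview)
Your proof is correct and follows essentially the same route as the paper: Bayes' rule plus the uniform prior turns the posterior into a positive constant times the likelihood, so the two $\arg\max$ sets coincide. You additionally handle the degenerate null-event case, which the paper's proof omits.
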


The above theorem is used to show the main results of this paper.
It allows us to calculate  $\Pr \bc{ \activenodes= X \mid \pz = v}$ instead of $\Pr \bc{ \pz = v \mid \activenodes=X }$,  which often is  more accessible.
Note that calculating the first probability is quite challenging in general networks $G$ since the entropy of $\activenodes$ is very large.

For the remaining analytical part of the of the paper we consider acyclic networks, namely $d$-regular trees and Galton-Watson trees with offspring distribution $\Po(\lambda)$.
Galton-Watson trees with offspring distribution $\cD$ are defined by the following experiment. 
We start with one node which spawns $\vec d_0 \sim \cD$ children. 
Recursively, any of the children $w_1, \ldots w_{\vec d_0}$ spawns $\vec \eta_1, \ldots, \vec \eta_{\vec d_0} \sim  \cD $ children (and so on). 
We call such a process super-critical, if with positive probability, the vertices spawned during the process form an infinite tree.
It is well known that a (not too dense) instance of an Erdős–Rényi random graph $\mathbb{G} \bc{ n, \frac{d}{n} }$ \emph{locally} looks like a Galton-Watson tree with offspring distribution $\Po(d-1)$.
In contrast, a random $d$-regular graph looks locally like a $d$-regular tree, provided $d$ is not too large~\cite{Frieze2016}.

We assume that a teacher fixes a time $t$ at which she observes the network. We assume that the student is not aware of the time when the process started. We define  $\activenodes$ as the set of nodes which are active after $t$ steps of the Independent Cascade model (starting from an unknown and randomly chosen source $\pz$).  Note that $\activenodes$ can be empty. 
The set of candidate nodes $\candidates$ are all nodes that have the same distance in $G$ to each node in $\activenodes$.
Note that the set $\candidates$ is not empty since $\pz\in \candidates$. 
The \emph{closest candidate} $\cc \in \candidates$ is defined as the node with minimum distance to all nodes in $\activenodes$. 
Our \emph{source detection heuristic} calculates the closest candidate $\cc\in \candidates$ and returns it as the estimated rumors source.
If  $\activenodes=\emptyset$  (the process died out before time $t$) or contains at most one node, the heuristic returns a failure.
The following results describe the probability of success or failure for this source detection heuristic with respect to the models parameter.

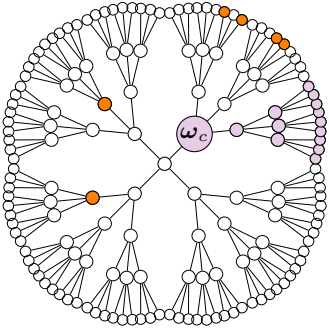
\begin{figure}[t]
    \centering
    \begin{minipage}{0.4\textwidth}



\tikzstyle{level 1}=[sibling angle=90,node, minimum size=0.5em,inner sep =0]
\tikzstyle{level 2}=[sibling angle=39.5, node, minimum size=0.5em,inner sep =0]
\tikzstyle{level 3}=[sibling angle=19, node, minimum size=0.5em,inner sep =0]
\tikzstyle{level 4}=[sibling angle=13.2, node, minimum size=4pt,inner sep =0]
\tikzstyle{edge from parent}=[draw, very thin]

\newcommand{\infNodes}{ k32,  k41}
\newcommand{\infLeafs}{k2222,k2223,k2311,k2313}
\newcommand{\originNode}{k2}
\newcommand{\candSub}{1}

\begin{tikzpicture}[decoration=brace,
  grow cyclic,
  shape = circle,
  level distance=2em,
  scale = 0.8]

\node[circle, minimum size=1em, level 1] at (-\betweenDist,0) {} child [level 1] foreach \A in {k4, k1, k2, k3}
    { node[level 1](\A) {} child [level 2] foreach \B in {\A1, \A2, \A3}
        { node[level 2](\B) {} child [level 3] foreach \C in {\B1, \B2, \B3}
            { node[level 3](\C) {} child [level 4] foreach \D in {\C1, \C2, \C3}
                {node[level 4](\D) {}} 
            }
        }
    };

\node[level 1,dottetReplace, inner sep = 0.2mm] (origin) at (\originNode) {{$\footnotesize\lca$}};

\node[level 2,dottetReplace] (origin) at (\originNode\candSub) {};
\foreach \x in {1,2,3}{
    \node[level 3, dottetReplace] () at (\originNode\candSub\x){};
    \foreach \y in {1,2,3}{
        \node[level 4, dottetReplace] () at (\originNode\candSub\x\y){};
    }
}

\foreach \x in \infLeafs
{
        \node[level 4, fill=\colOrange] (test) at (\x) {};
}
\foreach \x in \infNodes
{
        \node[level 2, fill=\colOrange] (test) at (\x) {};
}


\end{tikzpicture}
    \end{minipage}\begin{minipage}{0.6\textwidth}
    \caption{
    Visualization of a possible snapshot of the spreading process. 
    Here, $\cc$ spawned four sub-trees out of which three contain active elements of $\activenodes$ (orange nodes) and one does not contain active elements (purple). 
    Thus, the candidate set $\candidates$ of possible rumor's sources consists of all vertices in the purple sub-tree. 
    Note that here only a finite part of the infinitely expanding 4-regular tree is presented.
    }
    \label{fig:circltree}
    \end{minipage}
\end{figure}

The first theorem shows a phase transition between  weak and strong detection for $d$-regular trees.
We show that, for small $p\cdot(d-1) $ it is not possible to infer the source of the rumor. 
This is due to the huge likelihood that, in this setting,  the process dies out and $\activenodes=\emptyset$ or the corresponding set $\activenodes$ is very small which makes the inference impossible. 

For large $p\cdot (d-1)$ we show that the source node is the closest candidate with probability $1 - o_d(1)$.
Note that in this scenario, each active node will infect several nodes and it is unlikely that the process dies out.
The size of $\activenodes$ grows as a function of $d$ and $t$ which makes the prediction more and more reliable. 
For intermediate $p\cdot(d-1)$ we show that the closest candidate $\cc$ is the source of the rumor with a constant probability. 
Furthermore, we bound the probability that the real rumor source $\pz$ is far away from $\cc$. 
Intuitively this means that inference of $\pz$ gets easier for increasing values of $(d-1) \cdot p$.

\begin{theorem}[$d$-regular trees]\label{thm_lca_dreg}
Let $G = (V, E)$ be an infinite $d$-regular tree and let  $\activenodes$ be the set of active nodes generated 
by the Independent Cascade Process with spreading parameter $p$ after $t = \omega(1)$ steps. 
Then, the following phase-transitions occur.
\begin{itemize}[nosep]
    \item If $(d-1)\cdot p \leq 1$, any estimator fails at weak detection with probability $1-o_t(1)$.\footnote{We denote by $o_t(1)$ a quantity that tends to zero with $t \to \infty$.}
    \item If $1 < (d-1)\cdot p = \Theta(1)$ then the closest candidate $\cc$ is the source  of the rumor $\pz$ with constant probability (weak detection). 
    Furthermore, the probability that $\dist(\cc, \pz) > k$ is at most $\exp \bc{ - \Omega(k)}$.
    \item If $(d-1)\cdot p = \omega(1)$ then closest candidate $\cc$ is the source of the rumor $\pz$  with probability $1 - o_{d}(1)$ (strong detection).
\end{itemize}
\end{theorem}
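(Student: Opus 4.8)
The plan is to replace the cascade on the $d$-regular tree by a branching process, to describe the closest candidate $\cc$ combinatorially, and then to read off all three regimes from a single integer statistic. Root the tree at $\pz$; acyclicity forces the rumor to reach a vertex $v$ only along the unique $\pz$--$v$ path and in exactly $\dist(\pz,v)$ steps, so the informed vertices form a subtree rooted at $\pz$, the active set after $\ell$ steps lies in the sphere of radius $\ell$, and the level sizes $(|I_\ell|)_{\ell\ge1}$ form a Galton--Watson process with offspring law $\Bin(d-1,p)$ (and $\Bin(d,p)$ in the first generation). Write $m\coloneqq(d-1)p$ for the offspring mean and $\pi_j$ for the probability that this branching process lives for $j$ generations, so $\pi_0=1$ and $\pi_j=1-(1-p\pi_{j-1})^{d-1}$. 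Next I would establish the deterministic fact that, whenever $|\activenodes|\ge2$, the heuristic returns $\cc=\mathrm{lca}(\activenodes)$, the least common ancestor of $\activenodes$ in the rooted tree: if $v$ is equidistant from all of $\activenodes$, pick a child-subtree of $a\coloneqq\mathrm{lca}(\activenodes)$ that contains some $w\in\activenodes$ but not $v$ (possible since $\activenodes$ meets at least two such subtrees because $a$ is the \emph{least} common ancestor); the $v$--$w$ geodesic passes through $a$, so the common distance equals $\dist(v,a)+(t-\dist(\pz,a))\ge t-\dist(\pz,a)$ with equality iff $v=a$, while $a$ itself is equidistant (at distance $t-\dist(\pz,a)$) from $\activenodes$ --- so $a$ is the unique minimiser. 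Thus $\cc$ always lies on the path from $\pz$ to $\activenodes$, and $\cc=\pz$ iff $\activenodes$ meets at least two of $\pz$'s subtrees. Calling an informed vertex at a level $\ell<t$ \emph{$t$-live} if the informed subtree below it reaches level $t$, and letting $N_\ell$ be the number of $t$-live vertices at level $\ell$, this gives $\{\dist(\pz,\cc)\ge\ell\}=\{N_\ell=1\}$ and $\{\cc=\pz\}=\{N_1\ge2\}$.

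The engine is that a $t$-live vertex has at least one $t$-live child by definition, so $(N_\ell)_{0\le\ell\le t}$ is non-decreasing; and conditioning a level-$\ell$ vertex to be $t$-live just conditions its number of $t$-live children --- which is $\Bin(d-1,\beta_\ell)$ with $\beta_\ell\coloneqq p\pi_{t-\ell-1}$ --- to be positive, these counts being independent across a level since they sit in disjoint subtrees. Consequently $\Pr[\cc=\pz]=\Pr[\Bin(d,\beta_0)\ge2]$ and $\Pr[N_\ell=1]\le\prod_{i=0}^{\ell-1}r_i$, where $r_0=\Pr[\Bin(d,\beta_0)=1\mid\Bin(d,\beta_0)\ge1]$ and $r_i=\Pr[\Bin(d-1,\beta_i)=1\mid\Bin(d-1,\beta_i)\ge1]$ for $i\ge1$. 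If $(d-1)p\le1$ the process is sub- or critical, so $\pi_j\to0$ and $\Pr[\activenodes\ne\emptyset]=1-(1-p\pi_{t-1})^d\le2\pi_{t-1}=o_t(1)$; moreover on $\{\activenodes=\emptyset\}$ all vertices are indistinguishable, so by \Cref{thm_mle_general} and vertex-transitivity the likelihood $\Pr[\activenodes=\emptyset\mid\pz=v]$ does not depend on $v$ and no estimator beats uniform guessing, which succeeds with probability $0$ --- hence every estimator is correct with probability $o_t(1)$. If $(d-1)p=\omega(1)$ then $d\to\infty$, the extinction probability $q$ of the $\Bin(d-1,p)$-process satisfies $q\le\eul^{-m(1-q)}$ and hence $q=o(1)$, so $d\beta_0\ge dp(1-q)\ge m(1-q)\to\infty$ and $\Pr[\cc=\pz]=1-\Pr[\Bin(d,\beta_0)\le1]\ge1-(1+\eul\,d\beta_0)\eul^{-d\beta_0}=1-o_d(1)$.

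The substantive case is $1+\eps\le(d-1)p=O(1)$, where the only real work is a uniform bound $r_i\le1-c^\ast$ with $c^\ast=c^\ast(\eps)>0$. Since $m$ is bounded away from $1$, so is the extinction probability $q$, whence $\pi_j\ge1-q\ge c_1>0$ for all $j$ and therefore $(d-1)\beta_i\ge m(1-q)\ge(1+\eps)c_1=:\gamma_0>0$ and $d\beta_0\ge\gamma_0$; it then remains to verify the elementary inequality $\Pr[\Bin(k,\beta)=1\mid\Bin(k,\beta)\ge1]\le1-c^\ast$ for all integers $k\ge2$ and $\beta\in(0,1]$ with $k\beta\ge\gamma_0$ (distinguishing $\beta\le\tfrac12$ from $\beta>\tfrac12$, and $k\beta$ bounded from $k\beta$ large, and comparing in the Poissonian corner with $\Pr[\Po(\gamma_0)=1\mid\Po(\gamma_0)\ge1]<1$). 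Inserting this into the displays above yields $\Pr[N_\ell=1]\le(1-c^\ast)^\ell$, hence $\Pr[\dist(\cc,\pz)>k]\le\Pr[N_{k+1}=1]\le(1-c^\ast)^{k+1}=\exp(-\Omega(k))$, and, taking $\ell=1$, $\Pr[\cc=\pz]=\Pr[N_1\ge2]\ge c^\ast\,\Pr[N_1\ge1]\ge c^\ast(1-\eul^{-\gamma_0})>0$, i.e.\ weak detection.

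I expect the main obstacle to be precisely this uniform constant $c^\ast$, which must survive both the limit $(d-1)p\downarrow1$ (the conditioned branching process is barely supercritical) and $d\to\infty$ with $p\to0$ (each per-child survival probability $\beta_i$ is tiny). The non-decreasingness of $(N_\ell)$ after conditioning on $t$-liveness is what defuses it: it turns $\Pr[\dist(\pz,\cc)\ge\ell]$ into a clean product and isolates the difficulty into the stated binomial inequality. A more naive route through an unconditioned product formula for $\Pr[\dist(\pz,\cc)\ge\ell]$ would instead force a delicate near-criticality estimate balancing $\log m$ against the survival probability via the identity $\mathrm{Var}=m(1-p)$ relating offspring mean and variance.
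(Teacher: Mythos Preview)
Your proposal is correct and follows the same overall architecture as the paper: reduce the cascade to a Galton--Watson process with offspring law $\Bin(d-1,p)$, identify $\cc$ with the least common ancestor of $\activenodes$, and analyse the path from $\pz$ to $\cc$. The execution differs in two places worth recording. For weak detection, the paper lower-bounds $\Pr[\cc=\pz]$ by $\Pr[|\candidates|=1]$, i.e.\ the probability that \emph{all} $d$ subtrees of $\pz$ contain active nodes (paying $p^d$ times a $d$-fold survival probability); you instead compute $\Pr[\cc=\pz]=\Pr[N_1\ge2]$, needing only two $t$-live children, which is the sharper and more direct bound. For the tail bound, the paper argues informally that each step along the $\pz$--$\cc$ path contributes a factor bounded away from $1$ and writes the outcome as $\min_i\gamma_i^{\,k}$; your $t$-live process $(N_\ell)_\ell$, being non-decreasing, turns this into a genuine product $\Pr[N_\ell=1]\le\prod_{i<\ell}r_i$ with $r_i=\Pr[\Bin=1\mid\Bin\ge1]$ and isolates the only real work into the uniform binomial inequality you state (which indeed holds once one notes that $(d-1)p>1$ forces $d\ge3$, so $k\ge2$). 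Parts (i) and (iii) match the paper essentially verbatim. Your reformulation buys a cleaner, self-contained argument; the paper's version is shorter but leaves the uniformity in $i$ of the ``constants $\gamma_i$'' unspoken.
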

Since, in expectation, a $\Po(\lambda)$-Galton-Watson tree with $\lambda=d-1$ spawns exactly as many children as a $d$-regular tree, it might be not too surprising that a similar phase transition occurs in this model.  

\begin{theorem}[Galton-Watson processes]\label{thm_lca_gw}
Let $G = (V, E)$ be an infinite tree generated by a $\Po(\lambda)$-Galton-Watson process. Let  $\activenodes$ be the set of active nodes generated 
by the Independent Cascade Process with spreading parameter $p$ after $t = \omega(1)$ steps.  Then, the following phase-transition occurs.
\begin{itemize}[nosep]
    \item If $\lambda p \leq 1$, any estimator fails at weak detection with probability $1-o_t(1)$.
    \item If $1 < \lambda p = \Theta(1)$, then the closest candidate $\cc$ is the source  of the rumor $\pz$ with positive probability (weak detection).
    Furthermore, the probability that $\dist(\lca, \pz) > k$ is at most $\exp \bc{ - \Omega(k)}$.
    \item If $\lambda p = \omega(1)$, then closest candidate $\cc$ is the source of the rumor $\pz$  with probability $1 - o_{\lambda}(1)$  (strong detection).
\end{itemize}
\end{theorem}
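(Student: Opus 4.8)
The plan is to mirror the proof of \Cref{thm_lca_dreg}, exploiting one simplification special to the Galton--Watson setting. By Poisson thinning, the Independent Cascade process on a $\Po(\lambda)$-Galton--Watson tree is \emph{itself} a Galton--Watson process: a vertex with $\Po(\lambda)$ children that infects each independently with probability $p$ infects a $\Po(\lambda p)$ number of them, and these counts are independent across infected vertices. Thus, rooting the tree at $\pz$, the informed subtree is a $\Po(\lambda p)$-Galton--Watson tree and $\activenodes$ is exactly its generation-$t$ level. As in the $d$-regular case, since $G$ is a tree and every vertex of $\activenodes$ lies at distance $t$ from $\pz$, whenever $|\activenodes|\ge 2$ the closest candidate $\cc$ equals the most recent common ancestor (MRCA) of $\activenodes$, and $\dist(\cc,\pz)$ is the generation at which the \emph{skeleton} --- the subtree of vertices having a descendant in $\activenodes$ --- first branches; in particular $\cc=\pz$ exactly when at least two of the subtrees hanging off $\pz$ contain a generation-$t$ vertex.

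\emph{Subcritical and critical regime ($\lambda p\le 1$).} Here the thinned process is (sub)critical, so its probability of reaching generation $t$ is $o_t(1)$ --- geometrically small when $\lambda p<1$, of order $1/t$ when $\lambda p=1$. Hence $\Pr(|\activenodes|\ge 2)=o_t(1)$, so this event contributes only $o_t(1)$ to the success probability of any estimator; and on the complementary event $\{|\activenodes|\le 1\}$ the conditional law of $\pz$ given $(G,\activenodes)$ is diffuse --- essentially the uniform prior when $\activenodes=\emptyset$, uniform on the generation-$t$ sphere around the unique active vertex when $|\activenodes|=1$ --- so it is spread over $\omega_t(1)$ vertices and $\max_{v}\Pr(\pz=v\mid G,\activenodes)=o_t(1)$. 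Therefore no estimator succeeds with probability bounded away from $0$. (Giving the uniformly random source a precise meaning on an infinite tree is done as in \Cref{thm_lca_dreg}, through local weak limits of finite Galton--Watson trees.)

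\emph{Supercritical regime ($\lambda p>1$): detection.} Write $\rho=\rho(\lambda p)\in(0,1)$ for the survival probability of the $\Po(\lambda p)$ process and $q_s\downarrow\rho$ for its probability of reaching generation $s$. We bound $\Pr(\cc=\pz)$ from below by the probability that at least two of the subtrees hanging off $\pz$ each carry an \emph{infinite} infected ray; by thinning, conditionally on $\deg(\pz)$ these events are independent and each has probability $p\rho$, and $\Bin(\Po(\lambda),p\rho)\sim\Po(\lambda p\rho)$, so $\Pr(\cc=\pz)\ge\Pr(\Po(\lambda p\rho)\ge 2)$ uniformly in $t$. When $\lambda p=\Theta(1)$ this is a positive constant --- and the event forces $|\activenodes|\ge 2$, so the heuristic does not abort --- which is weak detection; when $\lambda p=\omega(1)$ the extinction probability $\gamma=1-\rho$ solves $\gamma=e^{-\lambda p(1-\gamma)}$, giving $\rho\ge 1-e^{-\Omega(\lambda p)}$, hence $\lambda p\rho=\omega(1)$ and $\Pr(\cc=\pz)\ge\Pr(\Po(\lambda p\rho)\ge 2)=1-o_\lambda(1)$ (using that $p<1$ forces $\lambda=\omega(1)$), which is strong detection.

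\emph{Supercritical regime: tail bound and main obstacle.} For $1<\lambda p=\Theta(1)$, the event $\{\dist(\cc,\pz)>k\}$ is precisely that the skeleton reaches generation $t$ and is a single path on generations $0,\dots,k$. Walking down this path, the number of prolific children of the unique generation-$j$ skeleton vertex is $\Po(\lambda p\,q_{t-1-j})$ conditioned on being at least $1$ (since that vertex is prolific), so it equals $1$ with probability $g(x_j)$, where $g(x)=x/(e^{x}-1)$ and $x_j=\lambda p\,q_{t-1-j}\in[\lambda p\rho,\lambda p]$. Since $g$ is continuous with $g(x)<1$ for $x>0$ (equivalently $e^x-1>x$) and the interval $[\lambda p\rho,\lambda p]$ is compact and bounded away from $0$, we get $g(x_j)\le 1-\delta$ for a fixed $\delta=\delta(\lambda,p)>0$, so $\Pr(\dist(\cc,\pz)>k)\le(1-\delta)^{k}=\exp(-\Omega(k))$. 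The main obstacle is exactly this intermediate regime: one must control $\Pr(\cc=\pz)$ and the first branching time of the skeleton \emph{uniformly} as $t\to\infty$, which needs quantitative control of the convergence $q_s\to\rho$ (i.e.\ of the iterates of the generating function of $\Po(\lambda p)$ near its fixed point) and of the spine-like reduced structure of a supercritical Galton--Watson tree conditioned to reach generation $t$; moreover, when $G$ is taken conditioned on being infinite the offspring law of a skeleton vertex gets reweighted, but the probability that it has exactly one prolific child still stays bounded away from $1$, so the exponential tail survives --- this extra bookkeeping, absent in the $d$-regular case, is routine.
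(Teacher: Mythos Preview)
Your proposal is correct and follows the same high-level strategy as the paper: reduce via Poisson thinning to a $\Po(\lambda p)$ Galton--Watson process, then separately analyze extinction (part (i)), the event that $\pz$ has several surviving subtrees (parts (ii)--(iii)), and the path structure forcing $\dist(\cc,\pz)>k$.

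Where you genuinely diverge is in the detection lower bound. The paper establishes $\Pr(\cc=\pz)=\Omega(1)$ by bounding $\Pr(|\cC|=1)$ from below, i.e.\ the probability that \emph{all} children of $\pz$ are infected and \emph{all} of the resulting subtrees survive; this leads to a computation with the modified Bessel function $I_0$. You instead lower-bound by the event that \emph{at least two} subtrees at $\pz$ carry an infinite infected ray, which via thinning is simply $\Pr(\Po(\lambda p\rho)\ge 2)$. Your route is shorter, avoids the Bessel identity, and has the pleasant feature that it handles parts (ii) and (iii) uniformly: when $\lambda p=\Theta(1)$ the mean $\lambda p\rho$ is a positive constant, and when $\lambda p=\omega(1)$ one has $\rho=1-o(1)$ so $\lambda p\rho=\omega(1)$, giving strong detection directly. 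The paper treats (iii) by a separate Chernoff argument. For the tail bound you make the paper's ``constants $\gamma_i$ bounded away from $0$ and $1$'' argument explicit via $g(x)=x/(e^x-1)$ on the compact interval $[\lambda p\rho,\lambda p]$, which is a nice sharpening. One small point: your ``is precisely'' in describing $\{\dist(\cc,\pz)>k\}$ glosses over the $|\activenodes|\le 1$ failure case, but since you only need an upper bound on the probability this does no harm.
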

In \cref{thm_lca_dreg,thm_lca_gw} we assume that the underlying tree network is infinitely large. This is conceptually necessary. Indeed, the trivial algorithm that outputs one node uniformly at random succeeds at weak detection in finite networks. In the next section, we provide extensive simulations that verify the asymptotic statements of the theorems on small networks. Furthermore, we present simulation results on non-acyclic networks such as random geometric graphs and show that (the natural extension of) the closest candidate heuristics works well.

The main proof strategy of \cref{thm_lca_dreg,thm_lca_gw} is to interpret the transmission process as a special type of percolation on the underlying network. As in \cref{thm_lca_gw} the underlying network itself is random, it turns out to be technically non-trivial to pin down the exact distributions involved in this process due to subtle rare events that might yield either very small or large node degrees. Fortunately, the Poisson thinning technique allows us to carry out the calculations smoothly.

\section{Simulations} \label{sec:simulations}

In this section we present simulation results that support and complement our main theorems for cyclic graphs. 
Our simulation software is implemented in the \CC{} programming language.
As a source of randomness we use the Mersenne Twister \texttt{mt19937\_64} provided by the \CC{11} \texttt{\textless random\textgreater{}} library.
All simulations have been carried out on four machines equipped with two Intel(R) Xeon(R) E5-2630 v4 CPUs and 128 GiB of memory each, running the Linux 5.13 kernel.

A simulation run consists of three parts. First, we generate a network $G = (V, E)$.
To this end, we have implemented generators for Erdős-Rényi graphs, random $d$-regular graphs (configuration model \cite{Janson_2011}) and random geometric graphs \cite{Penrose2003}.
Secondly, we run the Independent Cascade Process for $t$ rounds  starting from a randomly chosen node $\pz \in V$.
Finally, we generalize our source detection algorithm to cyclic graphs as follows.
For $v\in V$ and $t'\ge 0$ let $N_{t'}(v)$ denote the set of nodes $w \in V$ that have distance at most $t'$ to $v$.
Then we calculate 
\[ N_{t'}=\bigcap_{u \in \activenodes} N_{t'}(u).\]
Hence, $N_t'$ is the set of nodes with distance at most $t'$ to 
every node in $\activenodes$.
We pick the minimum $t'$ such that $N_{t'} \neq \emptyset$ and return $N_{t'}$ as the candidate set.
Note that such a $t'$ exists since $\pz \in N_t$.

To generate our data we execute 100 independent simulation runs for each network where we simulate the Independent Cascade Model for $8$ rounds.
Our data for Erdős-Rényi graphs with $n = 10^5$ nodes and expected node degree $4$ are shown in \cref{tab:table1}.
Similar data for random $4$-regular graphs in the configuration model and $2$-dimensional random geometric graphs with expected node degree $16$ can be found in \cref{apx:additional-data}. 
The tables show for various spreading probabilities $p$ the number of successes in detecting the source, the numbers of errors for the two error cases $\pz \not\in N_{t'}$ and $\activenodes = 0$, and the average and maximum distance of $\pz$ to nodes in $N_{t'}$.
The success rates are also visualized in \cref{fig:plot-1}.
We remark that our empirical data confirm the phase transitions as claimed in \cref{thm_lca_dreg,thm_lca_gw} for the even more general case of locally tree-like graphs, see below.

\begin{figure}[p]
\begin{table}[H]
\caption{Simulation results for Erdős-Rényi graphs.}
\centering
\label{tab:table1}
 \small
\begin{tabular}{lrrrrr}
\toprule
$p$ & \parbox{\widthof{number of}}{\raggedleft number of\newline successes}& $\cc \neq \pz$ & $\activenodes = \emptyset$ & \parbox{\widthof{distance}}{\raggedleft average\newline distance} & \parbox{\widthof{maximum}}{\raggedleft maximum\newline distance}\\
\midrule
0.00 & 0 & 0 & 100 & - & -\\
0.05 & 0 & 0 & 100 & - & -\\
0.10 & 0 & 0 & 100 & - & -\\
0.15 & 0 & 1 & 99 & 6.00 & 6\\
0.20 & 0 & 6 & 94 & 7.50 & 9\\
0.25 & 0 & 14 & 86 & 6.63 & 8\\
0.30 & 2 & 30 & 68 & 7.34 & 10\\
0.35 & 11 & 35 & 54 & 7.23 & 10\\
0.40 & 21 & 49 & 30 & 5.87 & 9\\
0.45 & 33 & 43 & 24 & 6.14 & 9\\
0.50 & 42 & 33 & 25 & 1.15 & 8\\
0.55 & 54 & 31 & 15 & 0.54 & 3\\
0.60 & 63 & 24 & 13 & 0.36 & 3\\
0.65 & 74 & 19 & 7 & 0.30 & 2\\
0.70 & 78 & 17 & 5 & 0.21 & 2\\
0.75 & 78 & 15 & 7 & 0.17 & 2\\
0.80 & 82 & 12 & 6 & 0.18 & 3\\
0.85 & 81 & 15 & 4 & 0.17 & 2\\
0.90 & 86 & 13 & 1 & 0.17 & 2\\
0.95 & 85 & 10 & 5 & 0.10 & 1\\
1.00 & 92 & 5 & 3 & 0.05 & 1\\
\bottomrule
\end{tabular}
\end{table}

\begin{figure}[H]
\def\ps{\small}
\begin{minipage}{1.8in}\input{plot1}\end{minipage}\begin{minipage}{\textwidth-1.8in}
\caption{Visualization of success rates for Erdős-Rényi graphs, random regular graphs, and random geometric graphs.}
\label{fig:plot-1}
\end{minipage}
\end{figure}

\begin{figure}[H]
\def\ps{\small}
\begin{minipage}{1.8in}\input{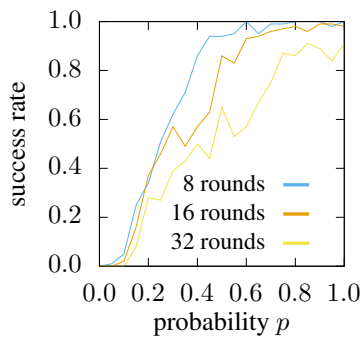}\end{minipage}\begin{minipage}{\textwidth-1.8in}
\caption{Visualization of success rates after $8$, $16$, and $32$ rounds of the Independent Cascade Model.}
\label{fig:plot-2}
\end{minipage}
\end{figure}

\begin{figure}[H]
\def\ps{\small}
\begin{minipage}{1.8in}\input{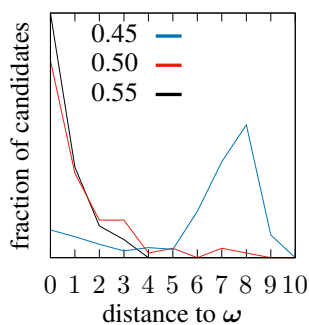}\end{minipage}\begin{minipage}{\textwidth-1.8in}
\caption{Histogram of the distribution of the distances of the candidates returned by our heuristic to $\pz$ for $p = 0.45, 0.5, 0.55$.}
\label{fig:plot3}
\end{minipage}
\end{figure}

\end{figure}

\paragraph{Varying Numbers of Rounds.}
In \cref{fig:plot-2} present simulation data for random geometric graphs with expected node degree $16$, where we increased the numbers of rounds of the Independent Cascade Model. We show the success rates after $8$, $16$, and $32$ rounds.

\paragraph{Phase Transitions.}
Finally in \cref{fig:plot3} we present an additional plot that highlights the phase transition behavior of our algorithm.
Recall that our heuristic in cyclic networks may return more than one candidate. 
In our simulation we compute the distance for each candidate found by our heuristics to $\pz$. 
The plot shows a histogram of these distances for three values of $p$, namely $0.45, 0.50,$ and $0.55$. 
From this histogram, a phase transition at $p = 0.5$ becomes eminent: while with $p = 0.45$ the majority of the candidates as a large distance to $\pz$, this changes drastically for $p = 0.55$, where the largest distance found is as small as $3$.

\section{Proofs}
In this section, we prove \cref{thm_mle_general,thm_lca_dreg}. Furthermore, the proof of \cref{thm_lca_gw} is sketched as it is similar to the one of \cref{thm_lca_dreg}. All omitted details can be found in the appendix.

\subsection{Proof of \cref{thm_mle_general}}
Observe that by definition, we have for any $v, w \in V$ that $ \Pr \bc{ \pz = v } = \Pr \bc{ \pz = w }$.
Thus, Bayes' rule and the law of total probability implies
\begin{align*}
    \Pr \bc{ \pz = v \mid \activenodes = X} &= \frac{ \Pr \bc{ \activenodes = X \mid \pz = v} \Pr \bc{ \pz = v }}{ \Pr \bc{ \activenodes = X } }  \\
    &= \frac{ \Pr \bc{ \activenodes = X \mid \pz = v} \Pr \bc{ \pz = v }}{ \sum\limits_{\omega \in V} \Pr \bc{ \activenodes = X \mid \pz = \omega} \Pr \bc{ \pz = w }} = \frac{ \Pr \bc{ \activenodes = X \mid \pz = v} }{ \sum\limits_{\omega \in V} \Pr \bc{ \activenodes = X \mid \pz = \omega}}.
\end{align*}
As $\sum_{\omega \in V} \Pr \bc{ \activenodes = X \mid \pz = \omega}$ is independent from $v$, we have
\begin{align*}
    \arg\max_{v \in V}~ \Pr \bc{ \pz = v \mid \activenodes = X} &= \arg\max_{v \in V}~ \Pr \bc{ \activenodes = X \mid \pz = v}
\end{align*}
and the theorem follows. \qed

\subsection{Proof of \cref{thm_lca_dreg}}
A crucial observation in the proof of \cref{thm_lca_dreg} is the following. 
If node $v$ gets activated during the spreading process by node $w$, it has $d-1$ additional neighbors $v_1, \ldots, v_{d-1}$ except $w$ which we call \emph{children} of $v$. 
Any of those children gets activated with probability $p$ independently from everything else in the next step. 
Suppose without loss of generality that $v_1, \ldots, v_{\vec d_0}$ are the activated children where $\vec d_0 \sim \Bin( d-1, p )$. 
In every of those children $v_i$, a new and independent rumor spreading process starts in the tree rooted at $v_i$ and directed away from $v$. 
As this tree is, itself, $d$-regular, this process is distributed equally as starting $\vec d_0$ independent Galton-Watson processes with offspring distribution $\Bin \bc{ d-1, p }$. 
Depending on $(d-1) p$, few, some or many of those processes will die out eventually.

To prove our result, we need some additional notation, see \cref{fig:Notation_Tree}. 
Given a node $v$, we can direct the tree away from $v$ and denote the set of subtrees rooted at $v$'s children by $\cT^v$. 
Most interesting to our proof are the subtrees that contain active nodes. 
We denote them by $\cT_{\activenodes }^{v}$ and denote 
\[Y_v = Y_v( \activenodes ) = \abs{\cT_{\activenodes }^{v}}.\] 
Note that all candidates but one have at most one subtree containing active nodes. 
Only the closest candidate can have more than one.
Finally, let $ \distactive{v}$ denote the distance from $v$ to any of the vertices in $\activenodes$.

\begin{figure}[ht]
    \centering
    \begin{minipage}{0.7 \textwidth}
    \tikzstyle{smallNode} = [shape=circle,minimum size = 4pt, draw=black, inner sep =0, fill = \colOrange]
\tikzstyle{smallNodeNotInf} = [shape=circle,minimum size = 3.5pt, draw=black, inner sep =0]
\newcommand{\tri}[4]{ \draw[rounded corners, draw=black,#4] #1--#2--#3 --#1; } 
\newcommand{\vDiff}{0.8}
\newcommand{\hDiff}{0.9}
\newcommand{\triWidth}{0.8}
\newcommand{\triHeight}{1.5}
\newcommand{\padding}{0.1}
\newcommand{\smallwidth}{0.3}


\tikzset{snake it/.style={decorate, decoration={snake,amplitude=.3mm, segment length=3mm}}}

\begin{tikzpicture}[decoration=brace,
  grow cyclic,
  shape = circle,
  level distance=2em,
  scale = 0.9]

    \node[node,dottetReplace, inner sep =1] (root) at (0,0) {\tiny$\lca$};
    
    \node[node] (c1) at (-2.2,-0.3) {};
    \node[node] (c11) at ($(c1)+(-\hDiff,-\vDiff)$) {};
    \node[node] (c13) at ($(c1)+(\hDiff,-\vDiff)$) {};
    \draw(c1) edge (c11);
    \draw(c1) edge (c13);
    \tri{(c11)}{+(-\triWidth,-\triHeight)}{+(\triWidth,-\triHeight)}{}
    \tri{(c13)}{+(-\triWidth,-\triHeight)}{+(\triWidth,-\triHeight)}{}
     \draw[\colGreen, rounded corners] ($(c1.north)+(-\hDiff,0)+(-\triWidth,0)+(-\padding,\padding)$) rectangle ($(c1)+(\hDiff,-\vDiff)+(\triWidth,-\triHeight)+(\padding,-\padding)$);
     
    \node[smallNode] (s12) at ($(c11) + (-\triWidth,-\triHeight) + (\smallwidth,0.1) $) {};
    \node[smallNodeNotInf] (s13) at ($(c11) + (-\triWidth,-\triHeight) + (2*\smallwidth,0.1) $) {};
    \node[smallNodeNotInf] (s14) at ($(c11) + (-\triWidth,-\triHeight) + (3*\smallwidth,0.1) $) {};
    \node[smallNode] (s15) at ($(c11) + (-\triWidth,-\triHeight) + (4*\smallwidth,0.1) $) {};
    \draw[snake it] (c11) -- (s12);
    \draw[snake it] (c11) -- (s13);
    \draw[snake it] (c11) -- (s14);
    \draw[snake it] (c11) -- (s15);
    
    \node[smallNode] (s22) at ($(c13) + (-\triWidth,-\triHeight) + (\smallwidth,0.1) $) {};
    \node[smallNodeNotInf] (s23) at ($(c13) + (-\triWidth,-\triHeight) + (2*\smallwidth,0.1) $) {};
    \node[smallNodeNotInf] (s24) at ($(c13) + (-\triWidth,-\triHeight) + (3*\smallwidth,0.1) $) {};
    \node[smallNodeNotInf] (s25) at ($(c13) + (-\triWidth,-\triHeight) + (4*\smallwidth,0.1) $) {};
    \draw[snake it] (c13) -- (s22);
    \draw[snake it] (c13) -- (s25);
    \draw[snake it] (c13) -- (s23);
    \draw[snake it] (c13) -- (s24);
    
    \node[node] (c2) at (2.2,-0.3) {};
    \node[node] (c21) at ($(c2)+(-\hDiff,-\vDiff)$) {};
    \node[node] (c23) at ($(c2)+(\hDiff,-\vDiff)$) {};
    \draw(c2) edge (c21);
    \draw(c2) edge (c23);
    \tri{(c21)}{+(-\triWidth,-\triHeight)}{+(\triWidth,-\triHeight)}{}
    \tri{(c23)}{+(-\triWidth,-\triHeight)}{+(\triWidth,-\triHeight)}{}{}
     \draw[\colGreen, rounded corners] ($(c2.north)+(-\hDiff,0)+(-\triWidth,0)+(-\padding,\padding)$) rectangle ($(c2)+(\hDiff,-\vDiff)+(\triWidth,-\triHeight)+(\padding,-\padding)$);
     \draw[\colBlue, rounded corners] ($(c1.north)+(-\hDiff,0)+(-\triWidth,0)+(-\padding,0)+(-\padding,0)+(0,0.6)$) rectangle ($(c2)+(\hDiff,-\vDiff)+(\triWidth,-\triHeight)+(2*\padding,-2*\padding)$);

    \node[smallNode] (s32) at ($(c21) + (-\triWidth,-\triHeight) + (\smallwidth,0.1) $) {};
    \node[smallNodeNotInf] (s33)at ($(c21) + (-\triWidth,-\triHeight) + (2*\smallwidth,0.1) $) {};
    \node[smallNodeNotInf] (s34)at ($(c21) + (-\triWidth,-\triHeight) + (3*\smallwidth,0.1) $) {};
    \node[smallNodeNotInf] (s35)at ($(c21) + (-\triWidth,-\triHeight) + (4*\smallwidth,0.1) $) {};
    \draw[snake it] (c21) -- (s32);
    \draw[snake it] (c21) -- (s33);
    \draw[snake it] (c21) -- (s34);
    \draw[snake it] (c21) -- (s35);

    \node[smallNodeNotInf] (s42) at ($(c23) + (-\triWidth,-\triHeight) + (\smallwidth,0.1) $) {};
    \node[smallNodeNotInf] (s43)at ($(c23) + (-\triWidth,-\triHeight) + (2*\smallwidth,0.1) $) {};
    \node[smallNodeNotInf] (s44)at ($(c23) + (-\triWidth,-\triHeight) + (3*\smallwidth,0.1) $) {};
    \node[smallNodeNotInf] (s45)at ($(c23) + (-\triWidth,-\triHeight) + (4*\smallwidth,0.1) $) {};
    \draw[snake it] (c23) -- (s42);
    \draw[snake it] (c23) -- (s43);
    \draw[snake it] (c23) -- (s44);
    \draw[snake it] (c23) -- (s45);

    \node[node,dottetReplace] (c3) at ($(root)+(0,\vDiff)+(0,0.2)$) {$\scriptsize v$};
    \node[node,dottetReplace] (c31) at ($(c3)+(-\hDiff,0.4*\vDiff)$) {};
    \node[node,dottetReplace] (c33) at ($(c3)+(\hDiff,0.4*\vDiff)$) {};
    \draw(c3) edge (c31);
    \draw(c3) edge (c33);
    \tri{(c31)}{+(-\triWidth,\triHeight)}{+(\triWidth,\triHeight)}{dottetReplace}
    \tri{(c33)}{+(-\triWidth,\triHeight)}{+(\triWidth,\triHeight)}{dottetReplace}
    \node[node,dottetReplace] (c31) at (c31) {};
    \node[node,dottetReplace] (c33) at (c33) {};

    \node[smallNodeNotInf] (s52) at ($(c33) + (-\triWidth,\triHeight) + (\smallwidth,-0.1) $) {};
    \node[smallNodeNotInf] (s53)at ($(c33) + (-\triWidth,\triHeight) + (2*\smallwidth,-0.1) $) {};
    \node[smallNodeNotInf] (s54)at ($(c33) + (-\triWidth,\triHeight) + (3*\smallwidth,-0.1) $) {};
    \node[smallNodeNotInf] (s55)at ($(c33) + (-\triWidth,\triHeight) + (4*\smallwidth,-0.1) $) {};
    \draw[snake it] (c33) -- (s52);
    \draw[snake it] (c33) -- (s53);
    \draw[snake it] (c33) -- (s54);
    \draw[snake it] (c33) -- (s55);
    \node[smallNodeNotInf] (s62) at ($(c31) + (-\triWidth,\triHeight) + (\smallwidth,-0.1) $) {};
    \node[smallNodeNotInf] (s63)at ($(c31) + (-\triWidth,\triHeight) + (2*\smallwidth,-0.1) $) {};
    \node[smallNodeNotInf] (s64)at ($(c31) + (-\triWidth,\triHeight) + (3*\smallwidth,-0.1) $) {};
    \node[smallNodeNotInf] (s65)at ($(c31) + (-\triWidth,\triHeight) + (4*\smallwidth,-0.1) $) {};
    \draw[snake it] (c31) -- (s62);
    \draw[snake it] (c31) -- (s63);
    \draw[snake it] (c31) -- (s64);
    \draw[snake it] (c31) -- (s65);
    
     \draw[-|, dashed, rounded corners] ($(root.east)+(\padding, 0)$) -- node[above right =-0.9ex and 0.1ex]{$t^*_v$} ($(c3.north east)+(\padding, 0)+(0.05,0)  $);
     \draw[|-|] ($(root.east)+(0,0)+(0.2, 0 )$) -- node[above right=0.8ex and 0.1ex] {$t^*_{\lca}$} ($(root.east)+ (0,-0.3)+(0,0-\triHeight-\vDiff)+(0.2, 0 )$) ;
    

    \draw (root) edge (c1);
    \draw (root) edge (c2);
    \draw (root) edge (c3);

      
    \node[] (legend) at ($(c3) + (4*\triWidth, \triHeight) + (\hDiff, 0.5)+(0.4,-0.5)$) {};
    \draw[] ($(legend) + (-2*\padding, \padding)$) rectangle ($(legend) + 3*(\padding, -\padding)+ 4*(0.25, -1.2)$) ; 
    \node[smallNode, label={[label distance=4]0:$\activenodes$}] (activeNodes) at ($(legend) + (\padding, -2*\padding)$) {} ;
    \node[node, dottetReplace, label={[label distance=0.1]0:$\candidates$}] (candidates) at ($(legend) + (\padding, -2*\padding)+ (0, -1.5)$) {};
    \node[shape = rectangle, draw = \colBlue, label={[label distance=0.1]0:$\cT^{v}_{\activenodes}$}] (blue) at ($(legend) + (\padding, -2*\padding)+ 2*(0, -1.5)$) {};
    \node[shape = rectangle, draw = \colGreen, label={[label distance=0.1]0:$\cT_{\activenodes}^{\lca}$}] (blue) at ($(legend) + (\padding, -2*\padding)+ 3*(0, -1.5)$) {};
\end{tikzpicture}

    \end{minipage}\begin{minipage}{0.3 \textwidth}
    \caption{
    Here, $\cc$ spawned three sub-trees out of which two contain active elements of $\activenodes$ (orange) and one does not contain active elements (purple). Thus, $\candidates$ consists of all vertices in the purple sub-tree rooted at $\cc$.
    }
    \end{minipage}
    \label{fig:Notation_Tree}
\end{figure}

\begin{proposition}\label{prop_dreg}
If $x_t$ is the probability that at time $t$ there are no more active vertices under the spreading model on an infinite $d$-regular tree with infection probability $p$, we find
\[x_t = (1-p+p x_{t-1})^{d-1}\]
and the ultimate extinction probability of the spreading process is the smallest fixed-point of $x \mapsto (1 - p + p x)^{d-1}$.
Furthermore
{\small
\begin{align*}
    \Pr \bc{ Y_v \bc{ \activenodes } = k \mid \pz = v} = 
    \begin{cases}
        \sum_{d_0 = k}^d \Pr ( \Bin(d,p) = \vec d_0) \binom{\vec d_0}{k} x_{\distactive{v}}^{\vec d_0 - k}(1-x_{\distactive{v}})^k & \text{if }v = \lca\\ 
        \sum_{d_0 = 1}^d \Pr ( \Bin(d,p) = \vec d_0) \binom{\vec d_0}{1}x_{\distactive{v}}^{\vec d_0 -1}(1-\distactive{v}) & \text{if }v \not = \lca, v \in \cC. \\
    \end{cases}
\end{align*}
}
\end{proposition}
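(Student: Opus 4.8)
The plan is to reduce the proposition to the self-similar branching structure of the cascade on a tree, a short binomial computation, and one monotone-iteration argument.

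\emph{The extinction recursion and its fixed point.} With $x_t$ as defined in the statement, I would obtain the recursion by a first-step analysis: the relevant root has $d-1$ onward children, each activated independently with probability $p$; an activated child roots an independent copy of the same process, which is active-free at the snapshot with probability $x_{t-1}$; an un-activated child contributes no active vertex; and the $d-1$ children are independent. Hence $x_t=(1-p+p\,x_{t-1})^{d-1}$. Since being active-free at time $t$ forces being active-free at time $t+1$ (a node below the root can only be informed through the root, so no later ``backwards'' infection occurs), the sequence $(x_t)_t$ is non-decreasing and bounded, so it converges to some $x_\infty\in[0,1]$; passing to the limit and using continuity of $\phi(x):=(1-p+px)^{d-1}$ gives $\phi(x_\infty)=x_\infty$. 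Because $\phi$ is non-decreasing and the iteration starts at $0$, for every fixed point $x^\ast\in[0,1]$ we get $x_t=\phi^{(t)}(0)\le\phi^{(t)}(x^\ast)=x^\ast$ for all $t$, and therefore $x_\infty$ --- the ultimate extinction probability of the process --- is the smallest fixed point of $\phi$.

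\emph{The law of $Y_v$ conditioned on $\pz=v$.} Conditioning on $\pz=v$, the source has no parent, so all $d$ of its neighbours are children, each activated at step $1$ independently with probability $p$; thus the number $\vec d_0$ of activated children is $\Bin(d,p)$-distributed. Conditioning further on the set of activated children, the cascade inside the corresponding $\vec d_0$ subtrees splits into $\vec d_0$ independent copies of the subtree process above, so each of these subtrees independently contains a vertex of $\activenodes$ --- equivalently, is counted in $\cT^v_{\activenodes}$ --- exactly when it is not active-free at the snapshot, which happens with probability $1-x_{\distactive{v}}$; here $\distactive{v}$ is legitimate because, $\pz$ being the source, every active vertex lies at distance exactly the snapshot time from it, so $v\in\candidates$. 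Therefore $Y_v\mid\vec d_0\sim\Bin(\vec d_0,\,1-x_{\distactive{v}})$ and, summing over $\vec d_0$,
\[
\Pr(Y_v=k\mid\pz=v)=\sum_{d_0=k}^{d}\Pr(\Bin(d,p)=d_0)\binom{d_0}{k}x_{\distactive{v}}^{d_0-k}\bigl(1-x_{\distactive{v}}\bigr)^{k},
\]
i.e.\ $Y_v\mid\{\pz=v\}\sim\Bin\bigl(d,\,p(1-x_{\distactive{v}})\bigr)$, which is the first line. For the second line I would add the structural observation that, given $\pz=v$, one has $v=\lca$ precisely when $Y_v\ge 2$: if two or more subtrees of $v$ carry active vertices then $v$ itself minimises the distance to $\activenodes$ (any other candidate is then confined to an active-free subtree of $v$ and hence strictly farther away), whereas if a single subtree carries all active vertices its root is a strictly closer candidate. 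Consequently, on the event that $\activenodes\neq\emptyset$ and $v\in\candidates\setminus\{\lca\}$ we have $Y_v=1$, and the relevant probability is the $k=1$ instance of the display, which equals the stated expression.

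\emph{Where the difficulty lies.} The binomial algebra and the monotone-iteration step are routine. The point that needs genuine care is the independence used twice above --- that the sub-processes rooted at the children of a given activated vertex are truly independent copies of the tree cascade. This is the branching (Markov) property of the Independent Cascade model on a tree: encoding the randomness as one independent $\Be(p)$ transmission variable per directed edge, a child $u$ of $v$ in the orientation away from the source can be reached only through $v$, so $u$ is still uninformed exactly when $v$ first becomes active, and the transmission variables of distinct subtrees are disjoint; the Galton--Watson description with $\Bin(d-1,p)$ offspring follows, and with it the identification of ``this subtree is not active-free at the snapshot'' with ``the branching process survives the right number of generations''. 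The last item to watch is lining the time indices up so that $x_{\distactive{v}}$ --- and not $x_{\distactive{v}\pm 1}$ --- appears in the law of $Y_v$; this amounts to bookkeeping the one-step delay with which each child of the source is activated.
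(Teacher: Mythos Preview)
Your proposal is correct and follows essentially the same approach as the paper: both condition on the number $\vec d_0$ of children of $v$ that are activated, invoke the independence of the subtree processes to identify the survival of each as an independent $\Bin(d-1,p)$ Galton--Watson process, and read off the $\Bin(\vec d_0,1-x_{\distactive{v}})$ law of $Y_v$. Your write-up is in fact somewhat more complete than the paper's: you supply the monotone-iteration argument for why the limit is the \emph{smallest} fixed point (the paper defers this to a reference), and you make explicit the structural reason why $v\in\cC\setminus\{\lca\}$ forces $Y_v=1$, which the paper states without justification.
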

\begin{proof}
The first part, namely that $x_t = (1-p+p x_{t-1})^{d-1}$, is an application of probability generating functions and their connection to branching processes. The probability generating function $f_{\Bin(n,p)}$ of a Binomial distribution with parameters $n$ and $p$ reads
$$ f_{\Bin(n,p)}(s) = \Erw \brk{ s^{ \Bin(n,p) } } = \bc{ 1 - p + ps }^n. $$
Now, let $p_k = \Pr \bc{ \Bin(d-1, p) = k}$. Then, it is immediate that
\begin{align*}
    x_{t+1} = p_0 + p_1 x_{t-1} + \ldots + p_{d-1} x_{t-1}^{d-1} = \sum_{k = 0}^{d-1} p_k x_{t-1}^k 
\end{align*}
which is exactly the probability generating function of the Binomial distribution. A detailed explanation and a formal proof of this statement can be, for instance, found in  \cite{branching_2020}. 

Let us now suppose that $v = \lca$. Let $\cV_0$ be the event that $v$ has exactly $k \leq \vec d_0 \leq d$ children that get activated by $v$. 
Clearly, $\Pr \bc{ \cV_0 } = \Pr \bc{ \Bin \bc{ d, p } = \vec d_0 }$ and of course, $\vec d_0$ needs to be at least $k$ as differently, the probability of having $k$ active sub-trees was zero. 
Given $\cV_0$, we find that the probability of observing exactly $k$ active sub-trees is the probability that exactly $k$ out of $\vec d_0$ independent Galton-Watson processes with offspring distribution $\Bin(d-1, p)$ survived the first $\distactive{v}$ steps. 
Therefore, the number of active sub-trees at time $t$ is distributed as $ \Bin \bc{ \vec d_0, x_{\distactive{v}} } $ given $\cV_0$ and the first part of the formula follows.

If, on contrary, $v$ is not the closest candidate but a node that has a different distance from $\activenodes$, we observe that from the originally $1 \leq \vec d_0 \leq d$ Galton-Watson processes originated in the children of $v$, exactly one process needed to survive and $\vec d_0 - 1$ needed to be extinct at time $\distactive{v}$. The proposition follows.
\end{proof}

\paragraph{Proof of \cref{thm_lca_dreg} (i).}
To prove the first part of \cref{thm_lca_dreg}, it suffices to apply the first part of \cref{prop_dreg}. Indeed, if $(d-1)p \leq 1$, we find that the smallest fixed-point of $x \mapsto (1 - p + p x)^{d-1}$ is $x = 1$. Therefore, $x_t = 1 - o_t(1)$. Furthermore, as $p$ is a constant, we have that in this case $d = \Theta(1)$ as well. Therefore, a union bound over the at most $d$ possible independent Galton-Watson processes with offspring distribution $\Bin(d-1, p)$ originated in the children of $\pz$, yields that with probability $1 - o_t(1)$, we find $\activenodes = \emptyset$. In this case, detection clearly fails with high probability.
\qed

\paragraph{Proof of \cref{thm_lca_dreg} (ii).}
We start with the part of the theorem that claims that weak detection succeeds by the source detection heuristic, namely that $\Pr \bc{\lca = \pz} = \Omega(1)$. We find that $\lca = \pz$ with probability one if the set of possible candidate nodes $\cC$ has size 1. Therefore, it suffices to prove that $\Pr \bc{ \abs{ \cC } = 1 } = \Omega(1)$. This is the case if (and only if), the rumor's source $\pz$ propagated the rumor to all of its $d$ children and all $d$ independent Galton-Watson processes with offspring distribution $\Bin(d-1, p)$ originated in the children of $\pz$ did not become extinct. Let $\vec d_0$ denote the number of children of $\pz$ that get activated. Clearly,
\begin{align} \label{eq_dreg_ii_i}
    \Pr \bc{ \vec d_0 = d } = \Pr \bc{ \Bin(d, p) = d } = p^d = \Omega(1)
\end{align}
as, by assumption, $p$ and $d$ are constants.
Furthermore, since $1 < (d-1)p = \Theta(1)$ holds, the smallest fixed point of $x \mapsto (1 - p + p x)^{d-1}$ is a real number between zero and one. Therefore, by \cref{prop_dreg}, there are constants $0 < \gamma_1 \leq \gamma_2 < 1$ such that
\begin{align} \label{eq_dreg_ii_ii}
    \gamma_1 - o_t(1) < x_t < \gamma_2 + o_t(1). 
\end{align}
Thus, it follows that the source detection heuristic succeeds at weak detection if $1 < (d-1)p = \Theta(1)$ with probability $1 - o_t(1)$ by \eqref{eq_dreg_ii_i} -- \eqref{eq_dreg_ii_ii}.

It is left to prove that under the same assumptions we have
\begin{align} \label{eq_dreg_ii_iii}
\Pr \bc{ \dist \bc{ \lca, \pz } \geq k  } \leq \exp \bc{ - \Omega(k) }.
\end{align}
Suppose that $ \abs{\distactive{\pz} - \distactive{\lca} } = \dist \bc{ \lca, \pz } = k > 3$. Therefore, there is a unique path $P_{\lca, \pz}$ in $G$ that connects $\lca$ and $\pz$ with $k-2$ internal vertices. All of those internal vertices will get activated exactly ones during the spreading process. Therefore, for any of those $k-2$ steps, the process needs to either activate only exactly one child or, it activates more than one child, but the remaining processes died out at the observation time. Of course, we have that the probability that a node spawns exactly one active child is given by
\begin{align} \label{eq_dreg_onechild}
    \Pr \bc{ \Bin( d, p ) = 1 } = p (1-p)^{d-1} 
\end{align}
which is a real number bounded away from zero and one if $d, p = \Theta(1)$.
By \eqref{eq_dreg_ii_i} -- \eqref{eq_dreg_ii_ii} as well as \eqref{eq_dreg_onechild}, we find that there is a sequence of constants $\cbc{\gamma_i}_{i=1 \ldots k-2}$ dependent only on $p, d,$ and $k$ all of which $\gamma_i$ are uniformly bounded away from zero and one. Therefore,  
$$\Pr \bc{ \dist \bc{ \lca, \pz } \geq k  } \leq \min_{i=1 \ldots k-2} \gamma_i^k = \exp \bc{ - \Omega(k) } $$
which implies \eqref{eq_dreg_ii_iii}.
\qed

\paragraph{Proof of \cref{thm_lca_dreg} (iii).}
The last part of \cref{thm_lca_dreg} (iii) states that the source detection heuristic succeeds at strong detection with high probability if $(d-1)p \in \omega(1)$. We start with the following simple observation which is an immediate consequence of the Chernoff bound applied to the $\Bin \bc{(d-1), p}$ distribution.
\begin{fact} \label{lem_dreg_iii_i}
If $(d-1)p \in \omega(1)$, we find that if node $v$ gets activated, the number of activated children $\omega_v$ satisfies $\omega_v \geq \frac{(d-1)p}{2} = \omega(1)$ with high probability. 
\end{fact}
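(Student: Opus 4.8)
The plan is to read off the exact distribution of $\omega_v$ and then invoke a standard Chernoff bound. First I would pin down this distribution. When $v$ becomes active, say because it was infected by a neighbour $w$, the remaining $d-1$ neighbours of $v$ are its children in the tree directed away from $w$, and none of them has been informed yet --- this is precisely where the tree structure is used. By the one-shot rule of the Independent Cascade Model, $v$ then attempts to activate each of these $d-1$ children independently with probability $p$, and these $d-1$ attempts are independent of the entire history of the process. Hence, conditioned on $v$ being activated, $\omega_v \sim \Bin(d-1,p)$ with $\Erw[\omega_v] = (d-1)p$; if $v = \pz$ there are $d$ such children instead, which only strengthens the bound.

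Next I would apply the multiplicative Chernoff bound to the lower tail of a Binomial variable with relative deviation $1/2$, which yields
\[
    \Pr\bc{\omega_v < \frac{(d-1)p}{2}} \leq \exp\bc{-\frac{(d-1)p}{8}}.
\]
Since $(d-1)p = \omega(1)$ by assumption, the right-hand side is $o_d(1)$, so $\omega_v \geq (d-1)p/2$ holds with high probability, and $(d-1)p/2 = \omega(1)$ follows from the same assumption. This is exactly the claim.

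I do not anticipate a genuine obstacle here: the statement is essentially a concentration inequality. The single point that deserves a sentence of justification is the independence invoked above, namely that a freshly activated node in a tree makes $d-1$ mutually independent infection attempts that are also independent of the process history. This holds because in a tree the children of $v$ are necessarily still uninformed at the moment $v$ becomes active, so the activation attempts of distinct active nodes never compete for the same target and the Binomial law for $\omega_v$ is exact rather than approximate.
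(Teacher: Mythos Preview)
Your proposal is correct and follows exactly the approach the paper intends: the paper simply states that the fact is ``an immediate consequence of the Chernoff bound applied to the $\Bin\bc{(d-1),p}$ distribution,'' and you have spelled out precisely this computation. Your additional remarks on why $\omega_v$ is exactly $\Bin(d-1,p)$-distributed in a tree are a welcome clarification but not a departure from the paper's reasoning.
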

As a next observation, we claim that if $\vec d_0 = \omega(1)$ Galton-Watson processes with offspring distribution $\Bin \bc{(d-1), p}$ start independently, at least $(1 - \eps)$-fraction of them will not become extinct eventually with high probability for any $\eps > 0$.
\begin{lemma} \label{lem_dreg_iii_ii}
Suppose that $\ell$ Galton-Watson processes with offspring distribution $\Bin \bc{(d-1), p}$ start independently under the condition that $(d-1)p \in \omega(1)$. Let $\vec Y$ denote the number of processes that do not ultimately become extinct. Then, $\Pr \bc{ \vec Y \geq (1 - o(1)) \ell} \geq 1 - o(1)$.
\end{lemma}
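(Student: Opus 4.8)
The plan is to separate the statement into two independent ingredients: a deterministic estimate showing that a single $\Bin(d-1,p)$-Galton--Watson process survives with probability $1-o(1)$, and a one-line first-moment bound controlling the number of the $\ell$ processes that die.

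First I would bound the single-process extinction probability $q$. By \cref{prop_dreg}, $q$ is the smallest fixed point in $[0,1]$ of $\phi(x) = (1-p+px)^{d-1}$, the probability generating function of $\Bin(d-1,p)$. Since $\phi(0) = (1-p)^{d-1} > 0$, the intermediate value theorem shows that $\phi(s_0) \le s_0$ for some $s_0 \in [0,1]$ forces $q \le s_0$. I would take $s_0 = \exp(-(d-1)p/2)$: the bound $1-x \le \eul^{-x}$ gives $\phi(s_0) \le \exp\bc{-(d-1)p(1-s_0)}$, and because $(d-1)p = \omega(1)$ we have $s_0 \le 1/2$ once $(d-1)p$ is large enough, so $1-s_0 \ge 1/2$ and $\phi(s_0) \le \exp(-(d-1)p/2) = s_0$. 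Hence $q \le \exp(-(d-1)p/2) = o(1)$, i.e.\ each process survives with probability $1-q = 1-o(1)$.

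For the second ingredient, write $\ell - \vec Y = \sum_{i=1}^{\ell} Z_i$, where $Z_i$ is the indicator that the $i$-th process becomes extinct. By the independence of the processes the $Z_i$ are i.i.d.\ $\Be(q)$, so $\Erw[\ell - \vec Y] = \ell q$, and Markov's inequality gives $\Pr\bc{\ell - \vec Y \ge \eps\ell} \le q/\eps$ for every $\eps > 0$. Choosing $\eps = \sqrt{q}$, which is $o(1)$ by the previous paragraph, yields $\Pr\bc{\vec Y \ge (1-\sqrt q)\ell} \ge 1-\sqrt q = 1-o(1)$; since $\sqrt q = o(1)$ this is precisely $\Pr\bc{\vec Y \ge (1-o(1))\ell} \ge 1-o(1)$. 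No concentration is needed, and the argument is valid for every $\ell \ge 1$.

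The only genuine difficulty is the extinction estimate. One cannot merely quote ``$q < 1$'' (true for any supercritical process), since a priori $q$ could be close to $1$; the point is that $(d-1)p = \omega(1)$ pushes $q$ all the way down to $o(1)$, and the fixed-point together with the intermediate value theorem, using the explicit test point $s_0 = \exp(-(d-1)p/2)$, makes this quantitative. An alternative would be a short bootstrap --- first derive $q \le 1/2$, then re-insert it into $q = \phi(q) \le \exp\bc{-(d-1)p(1-q)}$ --- but the single test point is cleaner.
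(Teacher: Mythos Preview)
Your proof is correct and follows the paper's two-step skeleton (bound the single-process extinction probability, then control the number of extinctions), but you flesh out both steps more carefully. The paper simply cites \cref{prop_dreg} and asserts $p_e = o(1)$ without verifying that the smallest fixed point of $x \mapsto (1-p+px)^{d-1}$ actually tends to $0$ when $(d-1)p \to \infty$; your explicit test point $s_0 = \exp(-(d-1)p/2)$ with $\phi(s_0) \le s_0$ supplies the missing quantitative bound $q \le s_0$. For the second step the paper appeals to a Chernoff bound on $\Bin(\ell,1-p_e)$, which in its usual multiplicative form needs $\ell \to \infty$ to yield a $1-o(1)$ conclusion, whereas your Markov argument $\Pr(\ell-\vec Y \ge \sqrt{q}\,\ell) \le \sqrt{q}$ is more elementary, does not use independence, and works uniformly for every $\ell \ge 1$. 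In the paper's actual application $\ell = \omega(1)$ by \cref{lem_dreg_iii_i}, so the Chernoff route is adequate there; your version, however, matches the lemma exactly as stated.
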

\begin{proof}
By \cref{prop_dreg}, we have that the probability that one of the processes becomes extinct is $p_e = o(1)$. Thus, the number of not-extinct processes is Binomially distributed with parameters $\ell$ and $1 - p_e$. Therefore, the lemma follows from the Chernoff bound.
\end{proof}

By \cref{lem_dreg_iii_i,lem_dreg_iii_ii} we directly find that, with high probability, all but $o(dp)$ of the processes started in the children of $\pz$ are still active at the observation time. 

Suppose that $\pz \neq \lca$ and $\dist(\pz, \lca) = k \geq 1$. This implies that either $k$ times only exactly one child is activated or, given that multiple children are activated, only exactly one of those spreading processes survived eventually. For a specific step $1 \leq i \leq k-1$, the probability that this occurs is by \cref{lem_dreg_iii_i,lem_dreg_iii_ii} at most $\gamma_i = o(1)$. Therefore,
$\Pr \bc{ \lca \neq \pz \mid \activenodes} = o(1)$
which implies the third part of \cref{thm_lca_dreg}.
\qed 

\subsection{Proof of \cref{thm_lca_gw}}
The main proof strategy is analogous to the proof of \cref{thm_lca_dreg} with one fundamental difference which makes the analysis more involved. While in the $d$-regular case, a once activated node spawns a random number of independent Galton-Watson processes with offspring distribution $\Bin(d-1, p)$, this is not true in the setting of \cref{thm_lca_gw}. Here, the underlying network itself is a Galton-Watson process with offspring distribution $\Po(\lambda)$. Fortunately, the \emph{Poisson thinning principle} \cite{Kingman1992} implies that, in distribution, we can analyze the following spreading process. Once $v$ gets activated, it spawns $\vec d_0 \sim \Po(\lambda p)$ active children and thus $\vec d_0$ independent Galton-Watson processes with offspring distribution $\Po(\lambda p)$. The extinction probability $\bar{x_t}$ can be derived similarly as in \cref{prop_dreg} and reads $\bar{x}_t = \exp \bc{- \lambda p (1 - \bar{x}_{t-1})}$.

Now, we have all ingredients to prove (i) -- (iii) of \cref{thm_lca_gw}. All proofs are based on similar ideas as their counter-parts in the proof of \cref{thm_lca_dreg}. The main challenge arising is that, in comparison to the Binomial distribution, the Poisson distribution has heavy tails which must be taken into account. This challenge is technically non-trivial but standard concentration tools for large deviation analyses suffice in order to prove the result. All omitted proof details can be found in the appendix.
\qed

\section{Conclusion} 
We pin down exact information-theoretic phase-transitions in the source detection task on important tree-network models by proving that as soon as weak detection is possible information-theoretically, the efficient closest candidate heuristics succeeds at this task. Those findings imply, of course, the same result for $\mathbb{G}(n,p)$ and random $d$-regular graphs as long as they are sufficiently sparse and the spreading process ran for only $o(\log(n))$ steps as those random networks are then, locally, given by the described tree-networks with high probability.

Furthermore, we show empirically that the source detection heuristic performs well on non-acyclic networks and seems to be a very decent and efficient estimator of the rumor's source. A natural question is whether it is possible to prove similar information-theoretic bounds for non-acyclic networks. While this seems to be a very challenging task in general, it might become accessible if we restrict ourselves to specific random networks or networks with a specific tree-width. 

Finally, on the empirical side, it is an interesting question whether the rumor's source of the Independent Cascade Process can be learned by graph neural networks. This seems challenging as only few vertices are active and the network would need to learn possible propagation paths in a graph given only a snapshot of the network.

\begin{ack}
PB, MHK, and MR are funded by the German Research Council (DFG FOR 2975).
\end{ack}

\bibliographystyle{abbrv}
\bibliography{bibliography}

\appendix

\section{Appendix}

\subsection{Proof in the Galton-Watson model, \cref{thm_lca_gw}}
The proof is similar to the proof of \cref{thm_lca_dreg}. The most fundamental difference is that in the $d$-regular case, a once activated node spawns a random number of independent Galton-Watson processes with offspring distribution $\Bin(d-1, p)$, this is not true if the underlying network itself is a Galton-Watson process with offspring distribution $\Po(\lambda)$. The \emph{Poisson thinning principle} \cite{Kingman1992} allows to analyze the resulting distribution.

\begin{fact}[Application of the Poisson thinning principle]
Let $\vec X \sim \Po(d)$ and furthermore, given $\vec X$, define $\vec Y = \Bin \bc{ \vec X, p }$. Then $\vec Y \sim \Po( \lambda p )$.
\end{fact}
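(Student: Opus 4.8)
The plan is to verify the standard Poisson thinning identity by a direct computation of the probability mass function of $\vec Y$, summing over the possible values of $\vec X$. First I would write, for any $k \in \ZZpos$,
\begin{align*}
\Pr \bc{ \vec Y = k } = \sum_{n \geq k} \Pr \bc{ \vec X = n } \Pr \bc{ \Bin \bc{ n, p } = k },
\end{align*}
which is just the law of total probability conditioning on the value of $\vec X$, using that $\Bin \bc{ n, p }$ can equal $k$ only if $n \geq k$.

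Next I would substitute $\Pr \bc{ \vec X = n } = \eul^{-d} d^n / n!$ and $\Pr \bc{ \Bin \bc{ n, p } = k } = \binom{n}{k} p^k (1-p)^{n-k}$, and then reindex the sum by $j = n - k$. After pulling out the factor $\eul^{-d} (dp)^k / k!$ and simplifying the factorials, the remaining sum becomes $\sum_{j \geq 0} \bc{ d(1-p) }^j / j! = \eul^{d(1-p)}$. Multiplying through, the $\eul^{-d}$ and $\eul^{d(1-p)}$ combine to $\eul^{-dp}$, so $\Pr \bc{ \vec Y = k } = \eul^{-dp} (dp)^k / k!$, which is exactly the $\Po(dp)$ mass function. (Note the statement writes $\lambda p$ where the hypothesis has $\Po(d)$; here $d$ plays the role of $\lambda$, so the conclusion reads $\Po(dp)$.)

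Alternatively, and perhaps more cleanly, I would argue via probability generating functions: the PGF of $\vec Y$ is $\Erw \brk{ s^{\vec Y} } = \Erw \brk{ \Erw \brk{ s^{\Bin(\vec X, p)} \mid \vec X } } = \Erw \brk{ (1 - p + ps)^{\vec X} }$, and since the PGF of $\Po(d)$ is $t \mapsto \eul^{d(t-1)}$, evaluating at $t = 1 - p + ps$ gives $\eul^{d(ps - p)} = \eul^{dp(s-1)}$, which is the PGF of $\Po(dp)$. Since a distribution on $\ZZpos$ is determined by its PGF, the claim follows.

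There is essentially no obstacle here: this is a textbook identity and both routes are routine. The only thing to be careful about is bookkeeping in the reindexing step (the factorial cancellation $\binom{n}{k}/n! = 1/(k!\,(n-k)!)$) and noting the change of notation between $d$ and $\lambda$; in the PGF proof one must additionally invoke the uniqueness of the generating function, which is standard.
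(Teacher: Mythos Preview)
Your proposal is correct; both the direct PMF computation and the PGF argument are standard, complete proofs of Poisson thinning, and you have also correctly flagged the $d$ versus $\lambda$ notational slip in the statement. The paper, however, does not prove this fact at all: it is stated as a known result with a citation to Kingman \cite{Kingman1992} and then used without further justification. So there is nothing to compare against on the paper's side; your write-up simply supplies the textbook argument that the paper omits.
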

Thus, once $v$ gets activated, it spawns $\vec d_0 \sim \Po(\lambda p)$ active children and thus $\vec d_0$ independent Galton-Watson processes with offspring distribution $\Po(\lambda p)$. The following proposition characterizes the extinction probability of such processes.

\begin{proposition}\label{prop_gwa}
If $\bar{x}_t$ is the probability that at time $t$ there are no more active vertices under the spreading model on a super-critical Galton-Watson tree with offspring distribution $\Po (\lambda p)$, we find
\[\bar{x}_t = \exp \bc{- \lambda p (1 - \bar{x}_{t-1})} \]
and the ultimate extinction probability of the spreading process is the smallest fixed-point of $\bar{x} \mapsto \exp \bc{- \lambda p (1 - \bar{x})}$.
Furthermore
\begin{align*}
    \Pr \bc{ Y_v \bc{ \activenodes } = k \mid \pz = v} = 
    \begin{cases}
        \sum_{d_0 = k}^d \Pr ( \Po( \lambda p ) = \vec d_0) \binom{\vec d_0}{k} x_{\distactive{v}}^{\vec d_0 - k}(1-x_{\distactive{v}})^k & \text{if }v = \lca\\ 
        \sum_{d_0 = 1}^d \Pr ( \Po( \lambda p ) \binom{\vec d_0}{1}x_{\distactive{v}}^{\vec d_0 -1}(1-\distactive{v}) & \text{if }v \not = \lca, v \in \cC \\
    \end{cases}
\end{align*}
\end{proposition}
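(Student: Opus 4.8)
The plan is to mirror the proof of \cref{prop_dreg}, substituting the Poisson offspring distribution for the Binomial one and invoking the Poisson thinning fact that has just been established. For the recursion on the extinction probability $\bar x_t$, I would first record the probability generating function of a $\Po(\mu)$ random variable, namely $f_{\Po(\mu)}(s) = \Erw\brk{s^{\Po(\mu)}} = \exp\bc{-\mu(1-s)}$. By the same branching-process argument as in \cref{prop_dreg} (see \cite{branching_2020}), if $v$ is active then it spawns $\vec d_0 \sim \Po(\lambda p)$ active children, each of which independently roots a fresh copy of the same process on the subtree directed away from $v$; the process rooted at $v$ is extinct after $t$ steps if and only if every one of those $\vec d_0$ subprocesses is extinct after $t-1$ steps. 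Conditioning on $\vec d_0$ and using independence gives
\begin{align*}
\bar x_t = \sum_{k\ge 0} \Pr\bc{\Po(\lambda p) = k}\,\bar x_{t-1}^{\,k} = f_{\Po(\lambda p)}(\bar x_{t-1}) = \exp\bc{-\lambda p(1-\bar x_{t-1})},
\end{align*}
with $\bar x_0 = 0$ (or $\bar x_1 = \exp(-\lambda p)$ as the base case). Since $s \mapsto \exp\bc{-\lambda p(1-s)}$ is increasing, continuous and convex on $[0,1]$ with value $e^{-\lambda p}>0$ at $s=0$ and value $1$ at $s=1$, the sequence $\bar x_t$ is monotone increasing and bounded, hence converges to the smallest fixed point of this map — this is the standard extinction-probability statement for Galton-Watson processes, which I would cite rather than reprove.

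For the formula for $\Pr\bc{Y_v(\activenodes) = k \mid \pz = v}$ I would argue exactly as in \cref{prop_dreg}, replacing $\Bin(d,p)$ (resp.\ $\Bin(d-1,p)$) by the appropriate Poisson laws. If $v = \lca$, condition on the event $\cV_0$ that $v$ activated exactly $\vec d_0$ of its children; here $\vec d_0 \sim \Po(\lambda p)$ by the thinning fact (the true offspring count is $\Po(\lambda)$ and each child is independently reached with probability $p$). Given $\cV_0$, each of the $\vec d_0$ subtrees contains an active node at time $t$ precisely if the Galton-Watson spreading process it carries has survived $\distactive{v}$ steps, which happens independently with probability $1 - x_{\distactive{v}}$; hence the number of active subtrees is $\Bin\bc{\vec d_0, 1 - x_{\distactive{v}}}$, and summing over $\vec d_0 \ge k$ yields the first case. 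If $v \in \cC$ but $v \ne \lca$, then by definition of $\cC$ exactly one of $v$'s subtrees may contain active nodes, so we need exactly one of the $\vec d_0 \ge 1$ subprocesses to survive and the other $\vec d_0 - 1$ to be extinct, giving the second case with the $\binom{\vec d_0}{1}$ factor.

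I would flag two mostly cosmetic points that should be corrected when writing this out carefully: the summation ranges $\sum_{d_0=k}^{d}$ and $\sum_{d_0=1}^{d}$ in the displayed formula should run to $\infty$ rather than to a finite $d$, since the Poisson law is unbounded; and the factor $(1-\distactive{v})$ in the second case is a typo for $(1 - x_{\distactive{v}})$, matching \cref{prop_dreg}. The genuine mathematical content beyond \cref{prop_dreg} is minimal — the only real input is the Poisson thinning fact, which lets us treat the \emph{spreading} process on the random Galton-Watson tree as an \emph{ordinary} Galton-Watson process with offspring law $\Po(\lambda p)$, thereby decoupling the randomness of the network from the randomness of the infection. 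The main (modest) obstacle is simply making this decoupling rigorous, i.e.\ checking that the thinning identity applies level-by-level and that the resulting subprocesses are genuinely mutually independent; once that is in place the rest is a verbatim transcription of the $d$-regular argument.
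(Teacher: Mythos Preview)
Your proposal is correct and follows essentially the same approach as the paper's own proof: both derive the recursion via the Poisson probability generating function (citing \cite{branching_2020}), then condition on the number $\vec d_0 \sim \Po(\lambda p)$ of activated children and count surviving subprocesses as $\Bin(\vec d_0, 1 - x_{\distactive{v}})$, treating the cases $v = \lca$ and $v \in \cC \setminus \{\lca\}$ exactly as in \cref{prop_dreg}. Your additional remarks on the base case, the monotone convergence to the smallest fixed point, and the two typos in the displayed formula (the upper summation limit $d$ should be $\infty$, and $(1-\distactive{v})$ should be $(1-x_{\distactive{v}})$) are all valid and go slightly beyond what the paper spells out.
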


\begin{proof}
As in the proof of \cref{prop_dreg}, the recurrence $\bar{x}_t = \exp \bc{- \lambda p (1 - \bar{x}_{t-1})}$ can be easily calculated by the probability generating function of the Poisson distribution. Indeed, let $f_{\Po(\lambda)}(s) = \Erw \brk{ s^{\Po(\lambda)} }$ be the probability generating function of the Poisson distribution. It is well known that
\begin{align*}
    f_{\Po(\lambda)}(s) = \exp \bc{ - \lambda (1 - s) }.
\end{align*}
Again, we refer to \cite{branching_2020} for a detailed explanation of the connection between the probability generating function and the extinction probability of branching processes.

Now, for brevity, suppose that $v = \lca$. Let $\cV_0$ be the event that $v$ has exactly $k \leq \vec d_0 \leq d$ children that get activated by $v$. Similarly as before, $\Pr \bc{ \cV_0 } = \Pr \bc{ \Po( \lambda p ) = \vec d_0 }$ and of course, $\vec d_0$ needs to be at least $k$ as differently, the probability of having $k$ active sub-trees was zero. 

Given $\cV_0$, we again start $\vec d_0$ independent Galton-Watson processes with offspring distribution $\Po( \lambda p )$ in the children. Therefore, the probability of observing exactly $k$ active sub-trees is the probability that exactly $k$ out of $\vec d_0$ of those processes are not extinct after $\distactive{v}$ steps. Of course, the number of such active sub-trees at time $t$ is distributed as $ \Bin \bc{ \vec d_0, \bar{x}_t } $ given $\cV_0$ and the first part of the formula follows.

As in the $d$-regular case, if on contrary $v$ is not the closest candidate but a node further apart from $\activenodes$, we observe that from the originally $1 \leq \vec d_0 \leq d$ Galton-Watson processes originated in the children of $v$, exactly one process needed to survive and $\vec d_0 - 1$ needed to be extinct at time $\distactive{v}$.
\end{proof}

\paragraph{Proof of \cref{thm_lca_gw} (i).}
As in the $d$-regular case, the first part of \cref{thm_lca_gw} follows by the first part of \cref{prop_gwa}. If $\lambda p \leq 1$, the smallest fixed-point of $\bar{x} \mapsto \exp{ - \lambda p (1 - \bar{x}) }$ is $\bar{x} = 1$. Therefore, $\bar{x}_t = 1 - o_t(1)$ describes the probability that the underlying spreading process died out until time-step $t$. More precisely, by the recurrence equation, we find the following. Suppose that $\eps_t = o_t(1)$ denotes the convergence speed towards $1$. Then, by the recurrence equation and a Taylor approximation we have
\begin{align*}
    1 - \eps_t &= 1 - \lambda p \bc{\eps_{t-1} + \frac{\lambda^2 p^2 \eps_{t-1}^2}{2} } + O \bc{ \eps_{t-1}^3 }.
\end{align*}
If $\lambda p < 1$, we directly find that $\eps_t = O \bc{ (\lambda p)^t }$ decays exponentially fast in $t$. If $\lambda p = 1$, this is much more subtle. Indeed, we find
\begin{align*}
    \eps_t & \leq \bc{\eps_{t-1} - \frac{\eps_{t-1}^2}{2} } + O \bc{ \eps_{t-1}^3 }
\end{align*}
and therefore, we only get $\eps_t = O \bc{ t^{-1} }$ in this case.

Since we assume $p$ to be a constant, clearly $\lambda = O(1)$ as well. Unfortunately, the Poisson tails are kind of heavy. More precisely, even if $\lambda$ is a constant, the probability that a $\Po(\lambda)$ variable becomes large is not negligible. We analyze this by a careful application of limits. Recall that we assume that the underlying tree-network is infinite. We model this as follows. Suppose that the tree-network consists of $n$ vertices and we will let $n \to \infty$.

Let $C > 0$, then the probability that the number of neighbors of a specific node $v$ exceeds $C$ is, for large $C$, given by Chernoff's bound as
\begin{align*}
    \Pr \bc{ \abs{ \partial v } > C } = \Pr \bc{ \Po(\lambda) > C } \leq \exp \bc{ - \frac{ (C - \lambda)^2 }{ 2 C } } \sim \exp \bc{ - C/2 }. 
\end{align*}
As the number of spawned children is independent for all vertices, the number of vertices of degree at least $C$ is stochastically dominated by a $\Bin \bc{n, \exp \bc{-C/2}}$. Thus, with probability $1 - o_n(1)$, there are no more than $O( n \sqrt{\log(n)} \exp \bc{ - D } )$ vertices of degree $D > 0$ for a sufficiently large constant $D$ (independent of $n$) if $n \to \infty$. 

We denote by $\cD$ the event that this is actually true. Thus, conditioned on $\cD$, there are only $O( n \sqrt{\log(n)} \exp \bc{ - D } )$ vertices of degree at least $D$. Now, we chose $\pz$ uniformly at random out of all vertices. Therefore, given $\cD$, the probability that $\pz$ has small degree is given by,
\begin{align*}
    \Pr \bc{ \abs{ \partial \pz } > D \mid \cD} = 1 - O \bc{ \frac{ \sqrt{\log(n)} }{\exp \bc{ - D }} }.
\end{align*}
Clearly, this becomes only a high probability event if $D = \Omega \bc{ \log \log n}$. In the worst case, we find that a union bound over all activated children of $\pz$ leads only to ultimate extinction of all processes, if 
\begin{align*}
    O \bc{ \frac{\log \log n}{t} } = o_t(1),
\end{align*}
or, differently, that $t = \Omega(\log \log n)$. As in the theorem, we only claim the assertion in the limit $t \to \infty$ and we assume the underlying tree-network to be infinite, this proves the claim of the theorem. We remark at this point that the assumption that $t$ depends slightly on $n$ does no harm to applications as, on real networks, $\log \log n$ can be seen as a constant.
\qed

\paragraph{Proof of \cref{thm_lca_gw} (ii).}
Again, as in the $d$-regular case, we start proving the weak detection property of the source detection heuristic. Thus, we aim to prove $\Pr \bc{ \abs{ \cC } = 1 } = \Omega(1)$. 

This is the case if (and only if) $\pz$ propagated the rumor to all of its $\vec d_{\pz} \Po \bc{ \lambda }$ children and all $\vec d_0$ independent Galton-Watson processes with offspring distribution $\Po \bc{\lambda p}$ rooted at the children of $\pz$ did die out eventually. Let $\vec d_0$ denote the number of children of $\pz$ that get activated. We first need to calculate the probability that $\vec d_0 = \vec d_{\pz}$. To this end, let
\begin{align*}
    I_0(x) & = \sum_{k=0}^\infty \frac{1}{k! \Gamma(k + 1)} \bc{\frac{x}{2}}^{2k}
\end{align*}
denote the modified Bessel function of order zero. It is well known (see, for instance, Equation (2) of \cite{besselfunctions}) that
\begin{align*}
    I_0(x) &= \frac{\exp \bc{x}}{\sqrt{2 \pi x}} \bc{ 1 + \frac{1}{8x} + \frac{9}{128 x^2} + O \bc{\frac{1}{x^3}} }.
\end{align*}

We have
\begin{align*} 
    \Pr \bc{ \vec d_0 = \vec d_{\pz} \mid \vec d_{\pz}} = \frac{ (\lambda p)^{\vec d_{\pz}} \exp \bc{- \lambda p}}{\vec d_{\pz}!}. 
\end{align*}
Therefore, by the law of total probability,
\begin{align} \label{eq_gw_ii_i}
    \Pr & \bc{ \text{all children of $\pz$ get activated} }  = \sum_{k = 1}^\infty \Pr \bc{\vec d_{\pz} = k } \frac{ (\lambda p)^{k} \exp \bc{- \lambda p}}{k!} \notag \\ 
    & = \sum_{k = 1}^\infty \frac{\lambda^k \exp \bc{ - \lambda}}{k!} \frac{ (\lambda p)^{k} \exp \bc{- \lambda p}}{k!} = \exp \bc{ - \lambda (1 + p) } \sum_{k = 1}^\infty \frac{ \bc{\lambda^{2} p}^k }{ (k!)^2} \notag \\
    & = \exp \bc{ - \lambda (1 + p) } \bc{ I_0( 2 \lambda \sqrt{p} ) - 1 } = \exp \bc{ - \lambda (1 + p) } \bc{ \frac{\exp \bc{ 2 \lambda \sqrt{p} }}{ \sqrt{4 \pi \lambda \sqrt{p}} } - 1 + O \bc{ \bc{\lambda \sqrt{p}}^{-1} } } \notag \\
    & = \frac{\exp \bc{ - \lambda (1 - \sqrt{p})^2} }{\sqrt{4 \pi \lambda \sqrt{p}}} - \exp \bc{ - \lambda (1 + p) } + O \bc{ \exp \bc{ - \lambda (1 + p) } \bc{\lambda \sqrt{p}}^{-3/2} }.
\end{align}
If $\lambda$ and $p$ are constants, it is immediate from \eqref{eq_gw_ii_i} that there is a constant $\gamma > 0$ such that
\begin{align}
    \Pr & \bc{ \text{all children of $\pz$ get activated} } > \gamma.
\end{align}
Finally, since $1 < \lambda p = \Theta(1)$ by assumption, the smallest fixed point of $\bar{x} \mapsto \exp \bc{ - \lambda(1 - \bar{x}) }$ is a real number between zero and one. Therefore, by \cref{prop_gwa}, there are constants $0 < \gamma_1 \leq \gamma_2 < 1$ such that
\begin{align} \label{eq_gw_ii_ii}
    \gamma_1 - o_t(1) < \bar{x}_t < \gamma_2 + o_t(1). 
\end{align}
Therefore, the source detection heuristic succeeds at weak detection if $1 < \lambda p = \Theta(1)$ with probability $1 - o_t(1)$ by \eqref{eq_gw_ii_i} -- \eqref{eq_gw_ii_ii}.

Again, we are left to prove the decay property, namely
\begin{align} \label{eq_gw_ii_iii}
\Pr \bc{ \dist \bc{ \lca, \pz } \geq k  } \leq \exp \bc{ - \Omega(k) }.
\end{align}
Suppose that $ \abs{\distactive{\pz} - \distactive{\lca} } = \dist \bc{ \lca, \pz } = k > 3$. As previously, we find a unique path $P_{\lca, \pz}$ in $G$ that connects $\lca$ and $\pz$ with $k-2$ internal vertices. Exactly as in the $d$-regular case, of those internal vertices will get activated exactly ones during the spreading process and so, in every step, the process needs to either activate exactly one child or all but one of the remaining processes ultimately are extinct. The probability that a node spawns exactly one child is given by
\begin{align} \label{eq_gw_onechild}
    \Pr \bc{ \Po( \lambda p ) = 1 } = \lambda \exp(- \lambda) 
\end{align}
which is a real number bounded away from zero and one if $\lambda = \Theta(1)$.
By \eqref{eq_gw_ii_i} -- \eqref{eq_gw_ii_ii} and \eqref{eq_gw_onechild}, there is a sequence of constants $\cbc{\gamma_i}_{i=1 \ldots k-2}$ dependent only on $\gamma(p, \lambda, k)$ and uniformly bounded away from zero and one, such that  
$$\Pr \bc{ \dist \bc{ \lca, \pz } \geq k  } \leq \min_{i=1 \ldots k-2} \gamma_i^k = \exp \bc{ - \Omega(k) } $$
and \eqref{eq_gw_ii_iii} follows.
\qed

\paragraph{Proof of \cref{thm_lca_gw} (iii).}
In order to prove the last part of \cref{thm_lca_gw}, we start with the following observation.
\begin{lemma} \label{lem_gw_iii_i}
If $\lambda p \to \infty$, then, with high probability, an activated node $v$ satisfies the following.
\begin{itemize}[nosep]
    \item $\deg(v) \geq \frac{\lambda}{2} = \omega(1)$ with high probability,
    \item The number of activated children $\vec \omega_v$ satisfies $\omega_v \geq \frac{\lambda}{2} = \omega(1)$ with high probability. 
\end{itemize}
\end{lemma}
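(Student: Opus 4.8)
The plan is to derive both assertions directly from lower-tail concentration of the Poisson distribution, using the forward, sequential nature of the exploration together with the Poisson-thinning description of the activation process established just above. Throughout, note that since $p$ is a constant and $\lambda p = \omega(1)$, also $\lambda = \omega(1)$.

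\textbf{Degree bound.} Fix a vertex $v$ of the $\Po(\lambda)$-Galton--Watson tree and condition on the event $\cA_v$ that $v$ is activated by the spreading process. Because the tree is revealed in the forward direction --- we only inspect the children of vertices that have already been reached --- conditioning on $\cA_v$ leaves the law of the number of children of $v$ unchanged, namely $\Po(\lambda)$; hence $\deg(v) - 1 \sim \Po(\lambda)$ for a non-source activated $v$, and $\deg(\pz) \sim \Po(\lambda)$, which only helps. The standard Chernoff bound for the lower tail of a Poisson variable then gives $\Pr\bc{\Po(\lambda) \le \lambda/2} \le \exp(-\Omega(\lambda)) = o_\lambda(1)$, so $\deg(v) \ge \lambda/2 = \omega(1)$ with high probability.

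\textbf{Activated children.} Conditioned on $v$ having $D_v \sim \Po(\lambda)$ children, each child is activated independently with probability $p$, so the number $\omega_v$ of activated children satisfies $\omega_v \mid D_v \sim \Bin(D_v, p)$; by the Poisson-thinning fact stated above, $\omega_v \sim \Po(\lambda p)$ unconditionally --- exactly the analogue of \cref{lem_dreg_iii_i} with $d-1$ replaced by $\lambda$. Applying the same Chernoff bound, $\Pr\bc{\Po(\lambda p) \le \lambda p/2} \le \exp(-\Omega(\lambda p)) = o_\lambda(1)$, so $\omega_v \ge \lambda p/2$ (more generally $\omega_v \ge (1-\eps)\lambda p$ for any fixed $\eps > 0$), and in particular $\omega_v = \omega(1)$, with high probability.

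\textbf{Where care is needed.} There is no genuinely hard step here. The two points requiring attention are: (i) that conditioning on ``$v$ is activated'' biases neither the offspring count of $v$ nor the activation coins of its children, which is immediate from the sequential exploration; and (ii) that the source $\pz$ has no parent edge, so its degree is smaller by one, which is harmless for a bound of the form $\lambda/2$. The heavy Poisson tails flagged elsewhere in the proof of \cref{thm_lca_gw} cause no trouble here, since we only need a \emph{lower}-tail estimate, and the lower tail of $\Po(\mu)$ is at least as light as that of the corresponding Gaussian.
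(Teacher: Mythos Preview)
Your proof is correct and follows exactly the paper's approach: the paper's own argument is a single sentence stating that the lemma is an immediate consequence of the Chernoff bound applied to the $\Po(\lambda)$ and $\Po(\lambda p)$ distributions. You supply the details the paper omits, and you correctly obtain the second bound as $\omega_v \ge \lambda p/2$ rather than the $\lambda/2$ printed in the statement (evidently a typo, in view of both the analogous $d$-regular lemma and the paper's own one-line proof).
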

\begin{proof}
This is an immediate consequence of the Chernoff bound applied to the $\Po(\lambda)$ and, respectively, $\Po(\lambda p)$ distribution given that $\lambda p \to \infty$.
\end{proof}
As a next observation, we claim that if $\vec \omega_v$ (as given by \cref{lem_gw_iii_i}) Galton-Watson processes with offspring distribution $\Po(\lambda p)$ are initialized independently, at least $(1 - \eps)$-fraction will survive with high probability for any $\eps > 0$.
\begin{lemma} \label{lem_gw_iii_ii}
Suppose that $X$ Galton-Watson processes with offspring distribution $\Po(\gamma)$ start independently under the condition that $\gamma \to \infty$. Let $\vec Y$ denote the number of such processes that did not ultimately become extinct. Then, $\Pr \bc{ \vec Y \geq (1 - o(1)) X} \geq 1 - o(1)$.
\end{lemma}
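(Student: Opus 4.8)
The plan is to reduce the claim to a straightforward Chernoff concentration argument, exactly parallel to the proof of \cref{lem_dreg_iii_ii}, with the only additional ingredient being a bound on the extinction probability of a single $\Po(\gamma)$-Galton-Watson process as $\gamma \to \infty$. First I would recall from \cref{prop_gwa} that the ultimate extinction probability $q(\gamma)$ of a $\Po(\gamma)$-branching process is the smallest fixed point of $\bar x \mapsto \exp(-\gamma(1-\bar x))$. So the first step is to show $q(\gamma) = o(1)$ as $\gamma \to \infty$: since $q = \exp(-\gamma(1-q))$, if $q$ stayed bounded away from zero along some subsequence then $1-q$ is bounded away from zero there, hence $\gamma(1-q) \to \infty$ and the right-hand side tends to $0$, a contradiction. (One can even read off $q(\gamma) \le \exp(-\gamma(1-q)) \le \exp(-\gamma/2)$ for $\gamma$ large once we know $q \le 1/2$.)

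Second, I would condition on the number of processes $X$ (which is itself a random variable coming from \cref{lem_gw_iii_i}, where $X = \vec\omega_v \ge \lambda/2 = \omega(1)$ with high probability, and $\gamma = \lambda p \to \infty$). Given $X$, the number $\vec Y$ of non-extinct processes is $\Bin(X, 1-q(\gamma))$ because the processes are independent. Then the Chernoff bound gives
\begin{align*}
\Pr\bc{ \vec Y < (1 - \eps) X \mid X } \le \exp\bc{ -\Omega(\eps^2 X) }
\end{align*}
for any fixed $\eps$, and since $X = \omega(1)$ with high probability this tends to $0$. Choosing $\eps = \eps(X) \to 0$ slowly (e.g. $\eps = X^{-1/3}$) still makes the right-hand side $o(1)$, which yields the stated conclusion $\Pr\bc{ \vec Y \ge (1 - o(1)) X } \ge 1 - o(1)$. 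Taking the expectation over $X$ and using that $X = \omega(1)$ whp finishes the argument.

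I do not expect a genuine obstacle here: the lemma is a soft concentration statement and every piece is either already available (\cref{prop_gwa} for the extinction recurrence, \cref{lem_gw_iii_i} for $X = \omega(1)$) or a one-line Chernoff estimate. The only point requiring a modicum of care is the first step, namely making the argument $q(\gamma) \to 0$ rigorous rather than hand-waving about fixed points; but the contradiction argument above, together with the observation that $\bar x \mapsto \exp(-\gamma(1-\bar x))$ is increasing and convex on $[0,1]$ with derivative $\gamma$ at $\bar x = 1$ exceeding $1$ once $\gamma > 1$, pins down the smallest fixed point and gives the desired quantitative decay. Everything else is routine.
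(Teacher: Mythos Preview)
Your proposal is correct and follows essentially the same route as the paper: invoke \cref{prop_gwa} to get that the extinction probability $p_e = q(\gamma) = o(1)$, observe that $\vec Y \sim \Bin(X,1-p_e)$, and apply Chernoff. The paper's own proof is two sentences and omits the justification that $q(\gamma)\to 0$ as well as the explicit conditioning on $X$; your version simply spells these out, so there is nothing substantively different.
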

\begin{proof}
By \cref{prop_gwa}, the probability that one specific process out of the $X$ processes gets extinct is $p_e = o(1)$. Thus, as in the $d$-regular case, we have that the number of not-extinct processes is $\Bin(X, 1-p_e)$-distributed. The lemma follows from Chernoff's bound.
\end{proof}

As in the $d$-regular case, the previous lemmas imply that with high probability $\omega(1)$ of the processes started in the children of $\pz$ are still active. 

Suppose that $\pz \neq \lca$ and $\dist(\pz, \lca) = k \geq 1$. This implies that either $k$ times only one active child is spawned or, if multiple children are spawned, only exactly one rumor spreading process rooted in those children survives eventually. But by \cref{lem_gw_iii_i,lem_gw_iii_ii}, we find that this probability is $o(1)$ for all $k \geq 1$.
\qed 

\clearpage
\section{Additional Simulation Data}

\label{apx:additional-data}
In this appendix we present the additional simulation data for random regular graphs (configuration model) with $d = 4$ (see \cref{tab:table2}) and random geometric graphs with an expected node degree of $16$ (see \cref{tab:table3}). We initialized both networks with $n = 10^5$ nodes.

\begin{table}[H]
\centering
\caption{Simulation results for random regular graphs (configuration model).}
\label{tab:table2}
 \small
\begin{tabular}{lrrrrr}
\toprule
$p$ & \parbox{\widthof{number of}}{\raggedleft number of\newline successes} & $\cc \neq \pz$ & $\activenodes = \emptyset$ & \parbox{\widthof{distance}}{\raggedleft average\newline distance} & \parbox{\widthof{maximum}}{\raggedleft maximum\newline distance}\\
\midrule
0.00 & 0 & 0 & 100 & - & - \\
0.05 & 0 & 0 & 100 & - & - \\
0.10 & 0 & 0 & 100 & - & - \\
0.15 & 0 & 0 & 100 & - & - \\
0.20 & 0 & 0 & 100 & - & - \\
0.25 & 0 & 5 & 95 & 7.20 & 9 \\
0.30 & 0 & 16 & 84 & 7.53 & 11 \\
0.35 & 2 & 26 & 72 & 6.86 & 12 \\
0.40 & 19 & 43 & 38 & 5.37 & 11 \\
0.45 & 38 & 40 & 22 & 2.70 & 11 \\
0.50 & 43 & 41 & 16 & 2.06 & 9 \\
0.55 & 70 & 23 & 7 & 0.57 & 6 \\
0.60 & 76 & 21 & 3 & 0.34 & 4 \\
0.65 & 86 & 11 & 3 & 0.15 & 3 \\
0.70 & 87 & 10 & 3 & 0.14 & 3 \\
0.75 & 98 & 2 & 0 & 0.02 & 1 \\
0.80 & 97 & 3 & 0 & 0.03 & 1 \\
0.85 & 99 & 1 & 0 & 0.01 & 1 \\
0.90 & 100 & 0 & 0 & 0 & 0 \\
0.95 & 100 & 0 & 0 & 0 & 0 \\
1.00 & 100 & 0 & 0 & 0 & 0 \\
\bottomrule
\end{tabular}
\end{table}

\begin{table}[H]
\centering
\caption{Simulation results for random geometric graphs (expected node degree $16$).}
\label{tab:table3}
 \small
\begin{tabular}{lrrrrr}
\toprule
$p$ & \parbox{\widthof{number of}}{\raggedleft number of\newline successes} & $\cc \neq \pz$ & $\activenodes = \emptyset$ & \parbox{\widthof{distance}}{\raggedleft average\newline distance} & \parbox{\widthof{maximum}}{\raggedleft maximum\newline distance}\\
\midrule
0.00 & 0 & 0 & 100 &  & 0 \\
0.05 & 1 & 1 & 98 & 2.25 & 4 \\
0.10 & 5 & 29 & 66 & 2.39 & 5 \\
0.15 & 25 & 59 & 16 & 2.02 & 6 \\
0.20 & 34 & 63 & 3 & 1.68 & 5 \\
0.25 & 51 & 48 & 1 & 1.51 & 4 \\
0.30 & 62 & 36 & 2 & 1.40 & 5 \\
0.35 & 71 & 29 & 0 & 1.24 & 5 \\
0.40 & 86 & 14 & 0 & 1.11 & 4 \\
0.45 & 94 & 6 & 0 & 1.04 & 3 \\
0.50 & 94 & 6 & 0 & 1.13 & 5 \\
0.55 & 95 & 5 & 0 & 1.07 & 5 \\
0.60 & 100 & 0 & 0 & 0.97 & 4 \\
0.65 & 95 & 5 & 0 & 1.03 & 6 \\
0.70 & 99 & 1 & 0 & 0.79 & 3 \\
0.75 & 99 & 1 & 0 & 1.03 & 5 \\
0.80 & 100 & 0 & 0 & 0.97 & 5 \\
0.85 & 100 & 0 & 0 & 0.96 & 6 \\
0.90 & 100 & 0 & 0 & 0.66 & 3 \\
0.95 & 98 & 2 & 0 & 0.87 & 5 \\
1.00 & 100 & 0 & 0 & 0.85 & 6 \\
\bottomrule
\end{tabular}
\end{table}

\end{document}